\pgfplotsset{compat=1.5}
\pgfplotsset{scaled x ticks=false}
\tikzset{
	-Latex,auto,node distance =1 cm and 1 cm,semithick,
	state/.style ={ellipse, draw, minimum width = 0.7 cm},
	point/.style = {circle, draw, inner sep=0.04cm,fill,node contents={}},
	bidirected/.style={Latex-Latex},
	el/.style = {inner sep=2pt, align=left, sloped}
}
\newtheorem{Theorem*}{Theorem}
\newtheorem{Claim*}[Theorem]{Claim}
\newtheorem{CounterExample*}{$\overline{\hbox{\bf Example}}$}
\newtheorem{Example*}[Theorem]{Example}
\newtheorem{Intuition*}[Theorem]{Intuition}
\newtheorem{Joke*}[Theorem]{Joke}
\newtheorem{Lemma*}[Theorem]{Lemma}
\newtheorem{Open problem}[Theorem]{Open problem}
\newtheorem{Question*}[Theorem]{Question}
\newtheorem{theorem}{Theorem}
\newtheorem{proposition}[theorem]{Proposition}
\newtheorem{lemma}[theorem]{Lemma}
\newtheorem{example}[theorem]{Example}
\newtheorem{definition}[theorem]{Definition}
\def \bSubexa    {\begin{subexa}}
\newcommand{\ignore}[1]{}
\newcommand{\EE}{\mathbb{E}}
\newcommand{\RR}{\mathbb{R}}
\newcommand{\MM}{\mathbb{M}}
\def \cK     {{\cal K}}
\def \cP     {{\cal P}}
\def \cX     {{\cal X}}
\def \cY     {{\cal Y}}
\def \cZ     {{\cal Z}}
\def \Paren#1{{\left({#1}\right)}}
\def \Brack#1{{\left[{#1}\right]}}
\def\ignore#1{}
\newcommand{\bi}{\begin{itemize}}
\newcommand{\ei}{\end{itemize}}
\def\orpro{\mathop{\mathchoice
   {\vee\kern-.49em\raise.7ex\hbox{$\cdot$}\kern.4em}
   {\vee\kern-.45em\raise.63ex\hbox{$\cdot$}\kern.2em}
   {\vee\kern-.4em\raise.3ex\hbox{$\cdot$}\kern.1em}
   {\vee\kern-.35em\raise2.2ex\hbox{$\cdot$}\kern.1em}}\limits}
\def\andpro{\mathop{\mathchoice
 {\wedge\kern-.46em\lower.69ex\hbox{$\cdot$}\kern.3em}
 {\wedge\kern-.46em\lower.58ex\hbox{$\cdot$}\kern.25em}
 {\wedge\kern-.38em\lower.5ex\hbox{$\cdot$}\kern.1em}
 {\wedge\kern-.3em\lower.5ex\hbox{$\cdot$}\kern.1em}}\limits}
\def\simge{\mathrel{%
   \rlap{\raise 0.511ex \hbox{$>$}}{\lower 0.511ex \hbox{$\sim$}}}}
\def\simle{\mathrel{
   \rlap{\raise 0.511ex \hbox{$<$}}{\lower 0.511ex \hbox{$\sim$}}}}
\providecommand{\email}[1]{\href{mailto:#1}{\nolinkurl{#1}\xspace}}
\newcommand{\sex}{S}
\newcommand{\lsex}{s}
\newcommand{\dept}{D}
\newcommand{\ldept}{d}
\newcommand{\outcome}{A}
\newcommand{\loutcome}{a}
\newcommand{\model}{M}
\newcommand{\doop}[1]{\text{do}(#1)}
\newcommand{\modelsedge}{\MM_{\text{cf}+}}
\newcommand{\modelsedgeSD}{\MM_{\text{cf}+SD}}
\newcommand{\nullgraph}{H^0_{\text{cf-graph}+}}
\newcommand{\nullgraphrelax}{H^0_{\text{cf-graph}}}
\newcommand{\nullgraphunconf}{H^0_{\text{no-cf-graph}}}
\newcommand{\nullobsunconf}{H^0_{\text{no-cf-obs}}}
\newcommand{\nullinter}{H^0_{\text{cf-inter}+}}
\newcommand{\nullinterrelax}{H^0_{\text{cf-inter}}}
\newcommand{\nullctrfunconf}{H^0_{\text{no-cf-ctrf}}}
\newcommand{\nullctrf}{H^0_{\text{cf-ctrf}+}}
\newcommand{\nullnotion}{H^0_{\text{cf-notion}+}}
\newcommand{\nullnotionrelax}{H^0_{\text{cf-notion}}}
\newcommand{\nullctrfrelax}{H^0_{\text{cf-ctrf}}}
\newcommand{\nullinterunconf}{H^0_{\text{no-cf-inter}}}
\newcommand{\nullgraphresp}{\bar{H}^0_{\text{graph}}}
\newcommand{\nullinterresp}{\bar{H}^0_{\text{inter}}}
\newcommand{\enop}{V}
\newcommand{\exrv}{W}
\newcommand{\spc}{\mathcal{X}}
\newcommand{\response}{R}
\newcommand{\respfunc}{r}
\newcommand{\distiv}{\cP_{\text{IV}+}}
\newcommand{\distmodeliv}{\cP_{\modeliv}}
\newcommand{\distinter}{\cP_{\text{cf-inter}}}
\newcommand{\distgraph}{\cP_{\text{cf-graph}}}
\newcommand{\distctrf}{\cP_{\text{cf-ctrf}}}
\newcommand{\distinterunconf}{\cP_{\text{no-cf-inter}}}
\newcommand{\distobsunconf}{\cP_{\text{no-cf-obs}}}
\newcommand{\distgraphunconf}{\cP_{\text{no-cf-graph}}}
\newcommand{\distctrfunconf}{\cP_{\text{no-cf-ctrf}}}
\newcommand{\modelsunconfedge}{\MM_{\text{no-cf}}}
\newcommand{\cg}[1]{G\Paren{#1}}
\newcommand{\indep}{\perp\!\!\!\perp}
\newcommand{\modeliv}{\MM_{\text{IV}+}}
\newcommand{\modelivZX}{\MM_{\text{IV}+ZX}}
\newcommand{\distnotion}{\cP_{\text{cf-notion}}}
\newcommand{\modelivrelax}{\MM_{\text{IV}}}
\newcommand{\mkiv}{\mathcal{K}_{\text{IV}}}
\newcommand{\mkmodeliv}{\mathcal{K}_{\MM_{\text{IV}}}}
\newcommand{\mkinter}{\cK_{\text{cf-inter}}}
\newcommand{\mkgraph}{\cK_{\text{cf-graph}}}
\newcommand{\mkctrf}{\cK_{\text{cf-ctrf}}}
\newcommand{\modelsedgerelax}{\MM_{\text{cf}}}
\newcommand{\mknotion}{\cK_{\text{cf-notion}}}
\title{Revisiting the Berkeley Admissions data: Statistical Tests for Causal Hypotheses}
\author[1]{Sourbh Bhadane} 
\author[1]{Joris M. Mooij}
\author[1]{Philip Boeken}
\author[2]{Onno Zoeter}
\affil[1]{Korteweg-de Vries Institute for Mathematics, University of Amsterdam}
\affil[2]{Booking.com, Amsterdam, The Netherlands}
\begin{document}
\maketitle

\begin{abstract}
Reasoning about fairness through correlation-based notions is rife with pitfalls. The 1973 University of California, Berkeley graduate school admissions case from \citet{BickelHO75} is a classic example of one such pitfall, namely Simpson’s paradox. The discrepancy in admission rates among male and female applicants, in the aggregate data over all departments, vanishes when admission rates per department are examined. We reason about the Berkeley graduate school admissions case through a causal lens. In the process, we introduce a statistical test for causal hypothesis testing based on Pearl's instrumental-variable inequalities \citep{Pearl95}. We compare different causal notions of fairness that are based on graphical, counterfactual and interventional queries on the causal model, and develop statistical tests for these notions that use only observational data. We study the logical relations between notions, and show that while notions may not be equivalent, their corresponding statistical tests coincide for the case at hand. We believe that a thorough case-based causal analysis helps develop a more principled understanding of both causal hypothesis testing and fairness.
\end{abstract}

\section{Introduction}\label{sec:intro}
In the fall of $1973$, the Graduate Division of the University of California, Berkeley, made admission decisions for $12763$ applicants to its $101$ departments. The admission rate for $8442$ male applicants was approximately $44.2\%$ and for $4321$ female applicants was approximately $34.6 \%$. This disparity prompted \citet{BickelHO75} to investigate whether the Graduate Admissions Office discriminated on the basis of sex. The authors found that despite there being a statistically significant disparity in the aggregate data, when each department was examined, the per-department admission rates did not differ significantly between the sexes, thus making this case an instance of Simpson's paradox. The resolution was that the ``proportion of women applicants tends to be high in departments that are hard to get into and low in those that are easy to get into''. The disparity was therefore attributed to societal biases and the authors concluded that there was ``no pattern of discrimination on the part of the admissions committee.''  

In the fairness literature, the Berkeley graduate admissions case is a canonical example of Simpson's paradox, which illustrates the limitations of correlation-based fairness notions such as demographic parity, and therefore motivates the need for causal reasoning of fairness. \citet[Section 4.5.4]{Pearl09} analyzes the Berkeley example and frames the conclusion of \citet{BickelHO75} as discerning the direct effect of sex on admissions outcome by conditioning on the mediator, namely department choice. Most works in the fairness literature that mention the Berkeley example follow this analysis, which is predicated on the assumption that the causal model includes no latent confounders while the causal graph is akin to a simple mediation graph with sex being the treatment, department choice being the mediator and the admissions decision being the outcome. However, in both \citet{Pearl09} and \citet{PearlMackenzie18}, Pearl notes that merely conditioning on department choice might not always be appropriate. In particular, he cites a fascinating exchange between William Kruskal and Peter Bickel in \cite{FairleyMosteller77} where Kruskal objects to the analysis in \cite{BickelHO75} by pointing out that controlling for department leads to erroneous conclusions if there is a confounder that affects department choice and admissions outcome. To the best of our knowledge, subsequent works that mention the Berkeley example, do not address the latent confounder issue, including \citet{Pearl09} where the analysis assumes that the common causes are observed. Further, while there are multiple causal fairness notions proposed in the literature,\footnote{Some directly inspired by the Berkeley admissions case, for example the path-dependent counterfactual fairness notion in \citet[Appendix S4]{KusnerLRS17}.} the issue of statistical testing of these fairness notions has received little attention. 

In this work, we undertake a causal reasoning exercise centered around the Berkeley admissions case. We take the view that a causal analysis is predicated on causal modeling assumptions that define a family of causal models. A \textit{fairness notion} is either an observational, interventional, counterfactual or a graphical query on a causal model which, as a result, defines a subset of the aforementioned family of causal models, i.e., a fairness notion defines a \textit{causal hypothesis}. Given that we usually have only observational data at hand the question of fairness boils down to statistical testing of a causal hypothesis. Indeed, the Berkeley admissions data can be thought of as sampled from the joint distribution of the sex, department choice and admissions outcome.\footnote{Albeit possibly post-selection, which we don't address in this work.}

For the Berkeley admissions case, we consider multiple fairness notions based on graphical, counterfactual and interventional queries to the family of causal models defined by our causal modeling assumptions, which allows latent confounding between department choice and admissions outcome. For these notions, we develop new statistical tests. One of our key insights is that the graphical notion of fairness can be tested by using the instrumental-variable (IV) inequalities \citep{Pearl95}, thus making our proposed statistical test a new test for the IV inequalities. Conversely, \emph{any} statistical test for the IV inequalities can be used to test for fairness in settings that are analogous to the Berkeley case. In the process, we also prove a result of independent interest, namely the sharpness of the IV inequalities for the case where the instrument and the effect are binary, and the treatment takes any finite number of values. For the Berkeley example, while our proposed fairness notions correspond to different rungs of the causal hierarchy and are in general not equivalent, we show, rather surprisingly, that the tests are equivalent within the IV setting. Although our results are inspired by the Berkeley case, they can also be applied in other analogous settings, e.g.\ to investigate sex discrimination in awarding distinctions to PhD students \citep{Bol23}.

\subsection{Related Work}
The question of fairness in decision-making and predictive systems has received increased attention since the past few decades. See \citet{HutchinsonMitchell19,BarocasHN23} for an excellent historical  and technical overview, respectively. While attempts at formalizing fairness lead to correlation-based notions such as fairness through unawareness \citep{DworkHPRZ12}, demographic parity, equality of odds \citep{HardtPS16} etc., purely observational notions of fairness are at odds with each other \cite{Chouldechova17, KleinbergMR17} and are prone to erroneous conclusions. On the other hand, observational notions of fairness are readily translated to statistical tests.

Causal analysis tools such as counterfactuals and interventions provide a framework suitable for fairness. As a result, multiple general fairness notions based on counterfactuals were proposed.  \citet{KusnerLRS17} defined a counterfactual fairness notion that required invariance of the distribution of the decision in a given context, with respect to hypothetical changes in the protected attribute. \citet{NabiShpitser18} and \citet{ZhangWW17} consider path-specific effects. \citet{Chiappa19} proposes a path-specific counterfactual fairness notion and a related notion appears in the appendix of \citet{KusnerLRS17}. Another separate line of work seeks to explain observed disparity through causal discrimination mechanisms \citep{ZhangBareinboim18, plecko2022causal}.

The Berkeley graduate admissions case makes an appearance in multiple papers to motivate the need for causal fairness notions. \cite{KilbertusRPHJS17, plecko2022causal,KusnerLRS17,Chiappa19,BerkKT23} are a few among many works. In addition, the Berkeley example also serves as a motivation to introduce path-specific notions given the assumption that the direct effect of sex on admissions outcome is the only `unfair' path. Also, see \citet{BarocasHN23} for a critique of this common assumption. \citet{Pearl09} considers the Berkeley example at length and illustrates the objection to controlling for the mediator by positing an observed confounder. 

Despite the fact that most causal fairness works mention the Berkeley example, to the best of our knowledge, no previous work gives a definitive answer to the question of fairness for the Berkeley dataset under unobserved confounding. \citet{KilbertusBKWS20} discusses the impact of unmeasured confounding under restrictive parametric assumptions. \citet{ZhangBareinboim18, plecko2022causal} consider fairness models that allow for specific forms of unobserved confounding. \citet{SchroderFF24} build on this by providing sensitivity analysis on fairness of prediction models. However, the kinds of unobserved confounding that they allow affects the sensitive attribute which is different from the kind we allow for in the Berkeley dataset. 



\section{Preliminaries} \label{sec:prelim}
We outline a few definitions that follow the formal setup of \cite{BongersFPM21}.
\begin{definition}[Structural Causal Model (SCM)]
A \textbf{Structural Causal Model (SCM)} is a tuple $\model = \Paren{\enop,\exrv,\spc,f,P}$ where a) $\enop,\exrv$ are disjoint, finite index sets of \textbf{endogenous} and \textbf{exogenous} random variables respectively, b) $\spc = \prod_{i \in \enop \cup \exrv} \spc_i $ is the \textbf{domain} which is a product of standard measurable spaces $\spc_i$, c) for every $v \in \enop$, $f_v:\spc \mapsto \spc_{v}$ is a measurable function,  and $f = (f_v)_{v \in V}$ is called the \textbf{causal mechanism}; the equations $X_v = f_v(X)$ for every $v \in \enop$ are called the \textbf{structural equations}, and d) $P\Paren{\spc_{\exrv}} = \bigotimes_{w \in \exrv} P(X_w)$ is the \textbf{exogenous distribution} which is a product of probability distributions $P(X_w)$ on $\spc_{w}$.  
\end{definition}

\begin{definition}[Parent]
Let $\model = \Paren{\enop,\exrv,\spc,f,P}$ be an SCM. $k \in \enop \cup \exrv$ is a \textbf{parent} of $v \in \enop$ if and only if it is not the case that for all $x_{\enop \backslash k}, f_v(x_V,X_W)$ is constant in $x_k$ (if $k \in \enop$, resp. $X_k$ if $k \in \exrv$)  $P(\spc_{\exrv})$-a.s..
\end{definition}

\begin{definition}[Causal Graph]
             Let $\model = \Paren{\enop, \exrv, \spc, f,P}$ be an SCM. The \textbf{causal graph}, $\cg{\model}$, is a directed mixed graph with nodes $V$, directed edges $u \longrightarrow v$ if and only if $u \in \enop$ is a parent of $v \in \enop$, and bidirected edges $u \leftrightarrow v$ if and only if $ \,\exists w \in \exrv$ that is a parent of both $u,v \in \enop$. 
\end{definition}
For simplicity of exposition, we restrict attention to acyclic SCMs, i.e.\ SCMs whose causal graph is acyclic (contains no directed cycle $X \rightarrow \cdots \rightarrow Y \rightarrow X$).
\begin{definition}[Observational Distribution]
Given an acyclic SCM, $\model= \Paren{\enop,\exrv,\spc,f,P}$, the exogenous distribution, $P$ and the causal mechanism $f$ induce a probability distribution over the endogenous variables which is called the \textbf{observational distribution}, $P_{\model}(X_{\enop})$.
\end{definition}

\begin{definition}[Hard Intervention]
Given an acyclic SCM, $\model= \Paren{\enop,\exrv,\spc,f,P}$, an intervention target $\,T \subseteq \enop$, and an intervention value $x_T \in \cX_{T}$, the \textbf{intervened SCM} is defined as $\model_{\doop{X_T=x_T}} \triangleq \Paren{\enop,\exrv,\spc,(f_{V \backslash T},x_T),P}.$ Further, the observational distribution of the intervened SCM, $P_{\model_{\doop{X_T=x_T}}}$, is called an \textbf{interventional distribution}, and denoted by $P_{M}\Paren{X_V \mid \doop{X_T=x_T}}$.
\end{definition}


    

\begin{definition}[Potential Outcome]
  Let $\model = \Paren{\enop, \exrv, \spc, f,P}$ be an acyclic SCM, $C \subseteq \enop$ and $x_C \in \spc_{C}$. A random variable $X^{\doop{x_C}}$ taking values in $\spc_{\enop} \times \spc_{\exrv}$ is called a \textbf{potential outcome} of $\model$ under intervention $x_C$ if a) its $\exrv$-component has the exogenous distribution specified by $\model$, i.e., $X^{\doop{x_C}}_{\exrv} \sim P\Paren{\spc_{\exrv}}$, and b) it satisfies $X^{\doop{x_C}}_C = x_C\ \mathrm{a.s.}$ and
    \begin{equation*}
        X^{\doop{x_C}}_{\enop \backslash C} = f_{\enop \backslash C}\Paren{x_C,X^{\doop{x_C}}_{\enop \backslash C},X^{\doop{x_C}}_{\exrv} } \text{ a.s.}.
    \end{equation*}
If $\,C = \emptyset$, we write $X$ instead of $X^{\doop{x_{\emptyset}}}$. 
When dealing with multiple potential outcomes, we assume that they share the same $\exrv$-component:
for any $n$, for $C_1, C_2, \cdots, C_n \subseteq \enop$ and for $x_{C_1} \in \cX_{C_1}, x_{C_2} \in \cX_{C_2}, \cdots, x_{C_n} \in \cX_{C_n}$, 
  $$X_{\exrv}^{\doop{x_{C_1}}} = X_{\exrv}^{\doop{x_{C_2}}}= \cdots =X_{\exrv}^{\doop{x_{C_n}}} = X_W \text{ a.s.}.$$ 
\end{definition}

Instrumental variables (IVs) are used to estimate the causal effect of a treatment $X$ on the outcome $Y$ in the presence of latent confounding. 
\begin{definition}[Instrumental Variable (IV) Model Class]\label{def:IVmodelclass}
The \textbf{instrumental variable model class}, $\modelivrelax$, is a collection of SCMs $\model$ such that $\cg{\model}$ is a subgraph of Figure~\ref{fig:iv} where $Z$ is the \textbf{instrument}, $X$ is the \textbf{treatment}, and $Y$ is the \textbf{outcome}. 
\end{definition}
We make an explicit positivity assumption and define $\modeliv \triangleq \left \{ \model \in \modelivrelax:\forall z, P_{\model}(Z=z) > 0 \right\}$. Causal graphs of SCMs in the IV model class rule out the directed edge $Z \rightarrow Y$ and any latent confounding between $Z$ and $Y$ as well. In Section~\ref{app:SDconf} we show that our results for the IV model class hold even when confounding is allowed between $Z$ and $X$.

\begin{figure}[t]
     \centering
            \begin{tikzpicture}
            \tikzstyle{vertex}=[circle,fill=none,draw=black,minimum size=17pt,inner sep=0pt]
\node[vertex] (Z) at (0,0) {$Z$};
\node[vertex] (Y) at (3,0) {$Y$};
\node[vertex] (X) at (1.5,0) {$X$};
\path (Z) edge (X);
\path (X) edge (Y);
\path[bidirected] (X) edge[bend left=60] (Y);
            \end{tikzpicture}
        \caption{Causal graph of $M \in \modelivrelax$} 
        \label{fig:iv}
        \end{figure}

\section{Berkeley Case: No Latent Confounding}\label{sec:modeling}

\begin{figure*}[t]
     \centering
     \begin{subfigure}{0.32\linewidth}
     \centering
            \begin{tikzpicture}
            \tikzstyle{vertex}=[circle,fill=none,draw=black,minimum size=17pt,inner sep=0pt]
\node[vertex] (S) at (0,0) {$S$};
\node[vertex] (A) at (2,0) {$A$};
\node[vertex] (D) at (1,1) {$D$};
\path (S) edge (D);
\path (D) edge (A);
\path (S) edge (A);
            \end{tikzpicture}
        \caption{$\model \in \modelsunconfedge$}
        \label{fig:no-cf-edge}
\end{subfigure}
     \begin{subfigure}{0.32\linewidth}
     \centering
            \begin{tikzpicture}
            \tikzstyle{vertex}=[circle,fill=none,draw=black,minimum size=17pt,inner sep=0pt]
\node[vertex] (S) at (0,0) {$S$};
\node[vertex] (A) at (2,0) {$A$};
\node[vertex] (D) at (1,1) {$D$};
\path (S) edge (D);
\path (D) edge (A);
\path[bidirected] (D) edge[bend left=60] (A);
\path (S) edge (A);
            \end{tikzpicture}
        \caption{$\model \in \modelsedgerelax$} 
        \label{fig:cf-edge}
        \end{subfigure}
         \begin{subfigure}{0.32\linewidth}
     \centering
            \begin{tikzpicture}
            \tikzstyle{vertex}=[circle,fill=none,draw=black,minimum size=17pt,inner sep=0pt]
\node[vertex] (S) at (0,0) {$S$};
\node[vertex] (A) at (2,0) {$A$};
\node[vertex] (D) at (1,1) {$D$};
\path (S) edge (D);
\path (D) edge (A);
\path[bidirected] (D) edge[bend left=60] (A);
            \end{tikzpicture}
        \caption{$\model \in \nullgraph$ and $\model \in \modeliv$} 
        \label{fig:cf-edge-iv}
        \end{subfigure}
        \caption{Causal graphs, $\cg{\model}$, assumed in various model classes.}
\end{figure*}


Under the semantic framework of SCMs, we first make the same causal modeling assumptions that are commonplace in works that mention the Berkeley admissions case. We compare fairness notions that are tied to these modeling assumptions, with the view that modeling assumptions describe a family of SCMs and fairness notions define a subset of this family. We relate existing general notions of fairness in the literature to this viewpoint. While this is a re-examination of the various existing analyses of the Berkeley admissions case, in the next section, we relax the causal modeling assumptions and consider the more general family of models that allow for confounding between department choice and admissions outcome.

The set of endogenous variables consists of the protected attribute, namely sex of the applicant, $\sex$, the department they applied to, $\dept$, and the decision of the admissions committee, $\outcome$. We assume that $\sex, \outcome$ are binary variables and $\dept$ is a discrete-valued variable taking finite number of values, where $\sex=0,1$ corresponds to male, female applicants, respectively, and $\outcome=0,1$ corresponds to reject and accept, respectively.\footnote{The assumption of binary sex is purely for mathematical simplicity.} Given that, possibly, societal biases nudge applicants to departments at differing rates depending on their sex, we assume that $\sex$ affects $\dept$. Since departments are the primary decision-making units and have different admission rates, we also assume that $\dept$ affects $\outcome$. The question of whether acceptance decisions discriminate against sex centers around the \emph{direct} causal effect of $\sex$ on $\outcome$ (w.r.t.\ the three variables $\sex, \dept, \outcome$), and therefore we allow such an effect in the model. Indeed, if such a direct causal effect is absent, then the applicant's sex does not directly affect the admission outcome (only indirectly through department choice). On the other hand, if such a direct causal effect is present, then we would consider this unfair, provided that we know that it is not mediated purely via latent unprotected attributes (e.g., undergraduate department, or undergraduate university chosen by the applicant) that could play a similar `fair' mediating role as $\dept$. We will---for the time being---assume no latent unprotected attributes exist that mediate an effect of sex on admissions outcome, and come back to this point in Section~\ref{sec:discussion}.

In this section, we also assume the absence of any confounding between the variables (in addition to the absence of any selection bias). The structural equations are given by
\begin{align}\label{eq:no-cf-edge}
    \sex &= f_{\sex}(U_{\sex}) \nonumber, \\
    \dept &= f_{\dept}(\sex, U_{\dept}), \\
    \outcome &= f_{\outcome}(\sex,\dept,U_{\outcome}), \nonumber
\end{align}
where $U_{\sex},U_{\dept}$ and $U_{\outcome}$ denote independent exogenous random variables. We denote the family of SCMs parameterized by the functions in \eqref{eq:no-cf-edge} and the exogenous distribution as $\modelsunconfedge$. 
For $\model \in \modelsunconfedge$, the causal graph $G(\model)$ is a subgraph of the directed acyclic graph (DAG) in Figure~\ref{fig:no-cf-edge}.

\subsection{Fairness Notions}
We define a fairness notion to be a certain condition that is required to be satisfied by a causal model to be deemed fair. These conditions can take the form of constraints expressed in terms of observational, interventional, counterfactual or graphical queries on the SCMs in the families of causal models defined by modeling assumptions, in our case, $\modelsunconfedge$. While the criteria for the fairness notions in Section~\ref{sec:modeling} are phrased in terms of queries corresponding to different rungs of the causal ladder, in our case, any condition can only be tested using observational data.




The investigation of Berkeley's admission data was initiated on the observation that the well-known fairness notion of demographic parity $P_{\model}\Paren{\outcome=1 \mid \sex = 0} = P_{\model}\Paren{\outcome=1 \mid \sex = 1}$ did not hold. This fairness notion is based purely on observational data and we have already noted that it falls prey to Simpson's paradox. We now present another observational notion of fairness that can be seen as a conditional version of demographic parity.
\ifdefined\SINGLE
\begin{definition}[Observational Notion of Fairness]
$\model \in \modelsunconfedge$ is fair according to the observational notion of fairness if it belongs to the null hypothesis set 
\begin{align*}
\nullobsunconf \triangleq &\left \lbrace \model \in \modelsunconfedge :  \forall \ldept, \lsex, P_{\model}\Paren{\dept = \ldept, \sex = \lsex} >0 \right.\\
&\left. \implies P_{\model}\Paren{\outcome=1 \mid \sex =\lsex, \dept = \ldept} = P_{\model}\Paren{\outcome=1 \mid \dept = \ldept} \right \rbrace.
\end{align*}
\end{definition}
\else
\begin{definition}[Observational Notion of Fairness]
$\model \in \modelsunconfedge$ is fair according to the observational notion of fairness if it belongs to 
\begin{align*}
&\nullobsunconf \triangleq \left \lbrace \model \in \modelsunconfedge :  \forall \ldept, \lsex,  P_{\model}\Paren{\dept = \ldept, \sex = \lsex} >0 \right.\\
&\!\!\implies \!\!P_{\model}\Paren{\outcome=1 \,|\, \sex =\lsex, \dept = \ldept} = P_{\model}\Paren{\outcome=1 \,|\, \dept = \ldept} \rbrace.
\end{align*}
\end{definition}
\fi

\citet{BickelHO75} proposed this notion for the Berkeley data. It is equivalent to the conditional independence $\outcome \indep \sex \mid \dept$.
Hence, any valid conditional independence test for $\outcome \indep \sex \mid \dept$ provides a statistical test for this notion. Indeed, the analysis of \citet{BickelHO75} shows that the data contain not enough evidence to reject the null hypothesis that this conditional independence holds, and therefore, concludes fairness. 

From the causal graph of $\modelsunconfedge$ in Figure~\ref{fig:no-cf-edge}, a natural subset of fair causal models is those without the edge $\sex \rightarrow \outcome$. 

\begin{definition}[Graphical Notion of Fairness]\label{def:graph_fairness}
     $\model \in \modelsunconfedge$ is fair according to the graphical notion of fairness if it belongs to the null hypothesis set $\nullgraphunconf \triangleq \left \lbrace \model \in \modelsunconfedge : \sex \rightarrow \outcome \notin \cg{\model} \right \rbrace$.
\end{definition}


\citet[Section 4.5.3]{Pearl09} discusses the direct effect in the context of the Berkeley admissions example, where he objects to conditioning on the department and instead proposes intervening on department choice, which corresponds to the controlled direct effect (CDE) \citep{Pearl01} of the `treatment', $\sex$, on the outcome, $\outcome$, for every value of the mediator, i.e., every department choice $\ldept$. 

\ifdefined\SINGLE
\begin{definition}[Interventional Notion of Fairness]
    $\model \in \modelsunconfedge$ is fair according to the interventional notion of fairness if it belongs to 
     \begin{equation*}\label{eq:interfairunconf}
    \nullinterunconf \triangleq \left \lbrace \model \in \modelsunconfedge: \forall \ldept,\lsex, P_{\model}\Paren{\outcome=1\mid\doop{\sex=\lsex},\doop{\dept=\ldept}} = P_{\model}\Paren{\outcome=1\mid\doop{\dept = \ldept}} \right \rbrace.
    \end{equation*}
    \end{definition}
\else
\begin{definition}[Interventional Notion of Fairness]
    $\model \in \modelsunconfedge$ is fair according to the interventional notion of fairness if it belongs to the null hypothesis set 
     \begin{align*}\label{eq:interfairunconf}
    &\nullinterunconf \triangleq \left \lbrace \model \in \modelsunconfedge: \forall \ldept,\lsex \right. \nonumber\\
   & \left. P_{\model}\Paren{\outcome=1\mid\doop{\sex=\lsex},\doop{\dept=\ldept}} \right. \nonumber\\
   &\left. = P_{\model}\Paren{\outcome=1\mid\doop{\dept = \ldept}} \right \rbrace.
    \end{align*}
\end{definition}
\fi

Recent analyses of the Berkeley example emphasize counterfactual notions of fairness. In \citet[Section 4.5.4]{Pearl09}, \citet{PearlMackenzie18}, Pearl considers a counterfactual quantity, namely the natural direct effect (NDE) \citep{RobinsG92, Pearl01}  by motivating a hypothetical experiment where ``all female candidates retain their department preferences but change their gender [sex] identification (on the application form) from female to male''. Subsequent causal fairness works \citep{NabiShpitser18, Chiappa19} build on this and propose fairness notions based on known path-specific versions of NDE where the `direct path' from $\sex$ to $\outcome$ is viewed as `unfair' as opposed to the `fair' path $\sex \rightarrow \dept \rightarrow \outcome$. For the Berkeley example, the NDE$(s\rightarrow s')$ is given by 
\begin{equation*}
P_{\model}\Paren{\outcome^{\doop{\sex = \lsex', \dept = \dept^{\doop{\sex=\lsex}}}}=1} - P_{\model}\Paren{\outcome^{\doop{\sex=s}}=1}
\end{equation*}
for $\lsex \neq \lsex'$. Note that by Pearl's mediation formula \citep{Pearl01}, the above is identified (assuming $\forall \ldept, \lsex, P_{\model}\Paren{\dept = \ldept, \sex = \lsex} > 0$) as 
\ifdefined\SINGLE
\begin{equation*}
    \sum\limits_{\ldept} \left( P_{\model}\Paren{\outcome=1 \mid \dept = \ldept, \sex = \lsex'}  - P_{\model}\Paren{\outcome=1 \mid \dept = \ldept, \sex = \lsex}\right)P_{\model}\Paren{\dept = \ldept \mid \sex= \lsex}.
\end{equation*}
\else
\begin{align*}
    \sum\limits_{\ldept} & \left( P_{\model}\Paren{\outcome=1 \mid \dept = \ldept, \sex = \lsex'} \right. \\
    &\left. - P_{\model}\Paren{\outcome=1 \mid \dept = \ldept, \sex = \lsex}\right)P_{\model}\Paren{\dept = \ldept \mid \sex= \lsex}.
\end{align*}
\fi
This implies that if the observational notion of fairness and positivity hold, the NDE is $0$. However, the converse is not necessarily true. For example, if one department favors male applicants and another favors female applicants, then the NDE could be $0$ while it is not necessary that the observational notion of fairness holds. 

Other counterfactual notions of fairness include those by \citet{KusnerLRS17}. The authors define a counterfactual fairness notion that implies demographic parity (see Section~\ref{app:kusnerctrfdemo} for a proof) for the Berkeley example; we have already seen that this particular fairness notion falls prey to Simpson's paradox. In the appendix, however, they define a path-dependent notion of counterfactual fairness.\footnote{This notion is specifically motivated by the Berkeley example.} In Section~\ref{app:kusnerpathnocf} we show that, in our setting, testing for the path-dependent counterfactual fairness notion is equivalent to testing for the conditional independence $A \indep S \mid D$. We now propose an alternate counterfactual notion of fairness and later compare testing of the same.


\begin{definition}[Counterfactual Notion of Fairness]\label{def:ctrf-nocf}
$\model \in \modelsunconfedge$ is fair according to the counterfactual notion of fairness if it belongs to the null hypothesis set
\ifdefined\SINGLE
\begin{equation*}\label{eq:nullctrf}
    \nullctrfunconf \triangleq \left \lbrace \model \in \modelsunconfedge: \forall \ldept, \lsex, P_M(A^{\doop{S=s,D=d}} = A^{\doop{D=d}}) = 1 \right \rbrace.
\end{equation*}
\else
\begin{align*}\label{eq:nullctrf}
    \nullctrfunconf &\triangleq \left \lbrace \model \in \modelsunconfedge:\forall \ldept, \lsex,  \right. \nonumber\\
    &\left. P_M(A^{\doop{S=s,D=d}} = A^{\doop{D=d}}) = 1 \right \rbrace
\end{align*}
\fi
\end{definition}
The alternative hypotheses are given by the complement of the null hypotheses w.r.t. $\modelsunconfedge$. We consider the interventional and counterfactual fairness notions as analogues of the absence of direct effect that are framed in terms of rung-$2$
 and rung-$3$ notions, namely interventional and counterfactual distributions, respectively. Therefore, the interventional notion of fairness considers comparing interventional distributions that result from intervening on all the variables with intervening on all variables but the protected variable, and the counterfactual fairness notions compare potential outcomes resulting from the same set of interventions. Given that the notions are defined on different rungs of the causal hierarchy, it is perhaps not surprising that they are nested accordingly. The assumption of no confounding simplifies the relations as we can prove equivalence of a few notions under positivity. The proof is deferred to Section~\ref{app:nested-nocf}.
\begin{restatable}{lemma}{unconfnested}\label{lem:notion_equiv}
\begin{equation*}\nullgraphunconf = \nullctrfunconf \subset \nullinterunconf \subset \nullobsunconf.\end{equation*}
If for all $s,d$, $P_{\model}(s,d) > 0$, then in addition, we have $\nullinterunconf = \nullobsunconf$.
\end{restatable}

Despite the nested nature of the fairness notions at different rungs of the causal hierarchy, we prove that the sets of observational distributions that these notions induce are identical. The proof is in Section~\ref{app:equiv-nocf}.
\begin{restatable}{theorem}{unconfequiv}\label{thm:unconf_test_equiv}
Let
\begin{align*} 
\distgraphunconf &\triangleq \left \lbrace P_{\model}\Paren{\dept,\outcome, \sex} : \model \in \nullgraphunconf \right \rbrace, \\
\distctrfunconf &\triangleq \left \lbrace P_{\model}\Paren{\dept,\outcome, \sex} : \model \in \nullctrfunconf \right \rbrace, \\
\distinterunconf &\triangleq \left \lbrace P_{\model}\Paren{\dept,\outcome, \sex} : \model \in \nullinterunconf \right \rbrace, \\
\distobsunconf &\triangleq \left \lbrace P_{\model}\Paren{\dept,\outcome, \sex} : \model \in \nullobsunconf \right \rbrace.
\end{align*}
Then $\distgraphunconf = \distctrfunconf = \distinterunconf = \distobsunconf.$
\end{restatable}


In summary, despite the fact that we analyze the Berkeley admissions case using multiple fairness notions, under the assumption of no confounding, with observational data, they can all be tested using a conditional independence test. 

If the data contains enough evidence to reject conditional independence, then the data generating mechanism is unfair w.r.t.\ the observational notion of fairness (assuming no latent unprotected mediators). On the other hand, if the data does not contain enough evidence to reject conditional independence, then the data generating mechanism is fair w.r.t.\ the observational notion of fairness. However, this extrapolation of the outcome of the statistical test on the fairness implications does not hold for the interventional, counterfactual and graphical notions. The following example illustrates that for the graphical notion of fairness, an unfaithful causal model, where $\outcome$ is directly affected by $\sex$, could satisfy conditional independence. 
\begin{example}\label{ex:unconfexample}
  Let $\model \in \modelsunconfedge$ be defined as $U_{\sex} \sim \text{Ber}(\delta), U_{\outcome} \sim \text{Ber}(\varepsilon), U_{\dept} \sim \text{Ber}(\frac{1}{2})$ where $\delta, \varepsilon \in [0,1]$ and $\sex = U_{\sex}, \dept=\sex \oplus U_{\dept}, \outcome = \sex \oplus \dept \oplus U_{\outcome}$. Here, $\outcome \indep \sex \mid \dept$ but $\sex$ is a parent of $\outcome$, i.e., $\model \in \nullobsunconf$, but $\model \notin \nullgraphunconf$. 
\end{example}
For the interventional notion of fairness, the following example illustrates that a causal model that violates positivity could satisfy conditional independence but not the interventional notion of fairness.
\begin{example}\label{ex:posexample}
    Let $\model \in \modelsunconfedge$ be defined as $U_{\sex}=0, U_{\dept}=0,U_{\outcome} \sim \text{Ber}\Paren{\varepsilon}$ where $\varepsilon \in [0,\frac{1}{2})$, and $\sex=0,\dept=0,\outcome=\sex \oplus U_A$. Here $\outcome \indep \sex \mid \dept$, but 
 for all $d$, $P_{\model}\Paren{\outcome = 1 \mid \doop{\sex=1}, \doop{D=d}} = 1-\varepsilon \neq P_{\model}\Paren{\outcome = 1 \mid \doop{D=d}} = \varepsilon$. Therefore, $\model \in \nullobsunconf$, but $\model \notin \nullinterunconf$.
\end{example}

So, if the outcome of the test is that conditional independence cannot be rejected ($\model \in \nullobsunconf$), then due to the aforementioned observations, we cannot conclude that the underlying causal model belongs to the causal null hypothesis of the interventional or counterfactual or graphical fairness notions, i.e., our conclusion is that fairness is ``undecidable'' with respect to these notions. However, if the outcome of the statistical test is that there is enough evidence in the data to reject conditional independence ($\model \notin \nullobsunconf$), then we can conclude that the underlying causal model does not belong to the causal null hypothesis of \textit{any} of the fairness notions. If we rule out the existence of any latent, unprotected mediator between $S$ and $A$, we can conclude that the data generating mechanism is unfair. 

In the next section, we enlarge the class of models to allow for confounding between $\dept$ and $\outcome$ and perform a similar reasoning exercise. 

\section{Berkeley Case: With Latent Confounding Between Department And Outcome}\label{sec:confounder}
\begin{figure}[!th]
    \centering
    \begin{tikzpicture}
\tikzstyle{vertex}=[circle,fill=none,draw=black,minimum size=17pt,inner sep=0pt]
\node[vertex] (S) at (0,0) {$S$};
\node[vertex] (A) at (2,0) {$A$};
\node[vertex] (D) at (1,1) {$D$};
\path (S) edge[bend left=10] (D);
\path (D) edge[bend left=10] (S);
\path (D) edge[bend left=10] (A);
\path (A) edge[bend left=10] (D);
\path[bidirected] (D) edge[bend left=60] (A);
\path[bidirected] (S) edge[bend left=60] (D);
\path[bidirected] (S) edge[bend right=60] (A);
\path (S) edge[bend left=10] (A);
\path (A) edge[bend left=10] (S);
    \end{tikzpicture}
    \caption{Causal graph of a model without assumptions}
    \label{fig:causal_modeling}
\end{figure}

We now take a more careful causal modeling approach. Instead of starting from variables and reasoning about structural equations that we allow, we start with assuming that all structural equations exist.\footnote{Since this allows for causal cycles, this would require using e.g. the framework of simple SCMs \citep{BongersFPM21}.} For the Berkeley example, Figure~\ref{fig:causal_modeling} shows a causal graph of an SCM that we start with. We now provide rationale for ruling out few structural equations. Based on chronology of events, we rule out those where $D$ directly affects $S$, where $A$ directly affects $D$ and where $A$ directly affects $S$. We rule out unobserved common causes of $S$ and $D$, and $S$ and $A$ since we model $S$ to be sex at birth. While latent selection bias might introduce bidirected edges \citep{ChenZM24} that are incident on $S$, we assume for now that there is no selection bias in the dataset. The resulting class of SCMs has structural equations of the form
\ifdefined\SINGLE
\begin{align}\label{eq:cf-edge}
    \sex &= f_{\sex}(U_{\sex}) \nonumber, \\
    \dept &= f_{\dept}(\sex,U, U_{\dept}), \\
    \outcome &= f_{\outcome}(\sex,\dept,U,U_{\outcome}), \nonumber
\end{align}
\else 
$\sex = f_{\sex}(U_{\sex}), \dept = f_{\dept}(\sex,U, U_{\dept}), \outcome = f_{\outcome}(\sex,\dept,U,U_{\outcome})$
\fi 
where $U,U_{\sex},U_{\dept}$ and $U_{\outcome}$ denote independent exogenous random variables. 
We define $\modelsedgerelax$ to be the family of models parameterized by the above structural equations and the exogenous distribution. Further, we define $\modelsedge = \left\{ \model \in \modelsedgerelax : \forall s, P_{\model}(S=s) > 0 \right\}$. For $\model \in \modelsedgerelax$ (and $\modelsedge$), the causal graph is a subgraph of the one shown in Figure~\ref{fig:cf-edge}. 

Although we arrived at allowing confounding between department and outcome through a careful causal modeling approach, this is not a novel consideration. In particular, Kruskal \citep[Pg 128-129]{FairleyMosteller77} demonstrated an example where the existence of a latent confounder, such as state of residence, can render \citet{BickelHO75}'s analysis incorrect. Other natural latent confounders include, for example, level of department-specific technical skills that influence both the department choice of an applicant and the admissions outcome.

Since our modeling assumptions expand the family of SCMs under consideration to $\modelsedge$, the fairness notions that we discussed in the previous section are modified accordingly to obtain null hypothesis sets $ \nullgraph, \nullinter$ and $\nullctrf$. 
\ifdefined\SINGLE
\begin{definition}[Fairness Notions with Confounding]\label{def:notions-cf}
For $\model \in \modelsedge$ the null hypothesis set corresponding to the interventional, counterfactual and graphical notion of fairness are 
\begin{align*}\label{eq:cf-def}
    \nullinter &\triangleq \left \lbrace \model \in \modelsedge: \forall \ldept,\lsex, P_{\model}\Paren{\outcome=1\mid\doop{\sex=\lsex},\doop{\dept=\ldept}} = P_{\model}\Paren{\outcome=1\mid\doop{\dept = \ldept}} \right \rbrace, \\
     \nullctrf &\triangleq \left \lbrace \model \in \modelsedge: \forall \ldept, \lsex, P_M(A^{\doop{S=s,D=d}} = A^{\doop{D=d}}) = 1 \right \rbrace, \\
     \nullgraph&\triangleq \left \lbrace \model \in \modelsedge : \sex \rightarrow \outcome \notin \cg{\model} \right \rbrace.
\end{align*}
\end{definition}
\else
\begin{definition}[Fairness Notions with Confounding]\label{def:notions-cf}
For $\model \in \modelsedge$ the null hypothesis set corresponding to the interventional, counterfactual and graphical notion of fairness are 
\begin{align*}\label{eq:cf-def}
    \nullinter &\triangleq \left \lbrace \model \in \modelsedge: \forall \ldept,\lsex, \right.\\
    &\left. P_{\model}\Paren{\outcome=1\mid\doop{\sex=\lsex},\doop{\dept=\ldept}} \right.\\
    &\left. = P_{\model}\Paren{\outcome=1\mid\doop{\dept = \ldept}} \right \rbrace, \\
     \nullctrf &\triangleq \left \lbrace \model \in \modelsedge: \forall \ldept, \lsex, \right.\\
     &\left. P_M(A^{\doop{S=s,D=d}} = A^{\doop{D=d}}) = 1 \right \rbrace, \\
     \nullgraph &\triangleq \left \lbrace \model \in \modelsedge : \sex \rightarrow \outcome \notin \cg{\model} \right \rbrace.
\end{align*}
\end{definition}
\fi
While the above notions generalize straightforwardly from the no-confounder setting, this is no longer the case for the observational notion. In addition, while the statistical tests for the no-confounder model are straightforward, this is no longer the case for the aforementioned null hypotheses since $\outcome \not\!\perp\!\!\!\perp \sex \mid \dept$ in general. 

\subsection{Graphical Notion and the Instrumental Variable (IV) Inequalities}\label{subsec:graph_iv}

In the presence of latent confounding, graphical queries, such as absence of edges, impose equality or inequality constraints \citep{Evans16, WolfeSF19} in addition to conditional independence constraints which are the only constraints imposed by a DAG. For the Berkeley case with confounding, since the path $\sex \rightarrow \dept \leftrightarrow \outcome$ is open when conditioned on $\dept$, we have $\sex \not\!\perp\!\!\!\perp \outcome \mid \dept$ in general. 
Our test for the graphical notion of fairness for $\modelsedge$ stems from the observation that a model $\model \in \nullgraph$ lies in the instrumental variable (IV) model class $\modeliv$ where $\sex$ is considered the instrument, $\dept$ the treatment, and $\outcome$ the effect. If all modeled endogenous variables are discrete-valued, a necessary condition for the observational distribution\footnote{While we express the IV inequalities as a condition satisfied by the observational distribution, in Section~\ref{app:iv} we reason that they are more appropriately expressed as conditions in terms of $P_{\model}(X,Y \mid \doop{Z})$.} resulting from $\model \in \modeliv$ is to satisfy the IV inequalities \citep{Pearl95}, which in the context of Figure~\ref{fig:iv} are given by  
\begin{equation}\label{eq:iv}
    \max_{x} \sum_{y} \max_{z} P_{\model}\Paren{X=x,Y=y\mid Z=z} \leq 1. 
\end{equation}
Since the IV inequalities are only necessary conditions, an arbitrary distribution on $X,Y,Z$ that satisfies the IV inequality does not necessarily imply that it is an entailed distribution of a model from the IV model class. \citet{Bonet01} showed that for the binary instrument, treatment and effect case, the IV inequalities are also sufficient conditions. In Theorem~\ref{thm:iv_tight}, we show that for the case where the instrument and outcome are binary and the treatment is discrete-valued with finite support, any distribution that satisfies the IV inequality is also entailed by some causal model from the IV model class. To the best of our knowledge, Theorem~\ref{thm:iv_tight} is a novel result. We defer the proof to Section~\ref{app:ivsharp}.
\ifdefined\SINGLE
\begin{restatable}{theorem}{ivtight}\label{thm:iv_tight}
Let $X,Y,Z$ be discrete random variables defined on $\cX,\cY,\cZ$ respectively, with $|\cX| = n\geq 2, |\cY|=2, |\cZ| =2$. Let the set of joint distributions that satisfy the IV inequalities be defined as $\distiv \triangleq \left \lbrace P(X,Y,Z) : P(X,Y \mid Z)\text{ satisfies }\eqref{eq:iv} \text{ and } \forall z, P(Z=z)>0 \right \rbrace$. Define $\distmodeliv \triangleq \left \lbrace P_{\model}(X,Y,Z) : \model \in \modeliv \right \rbrace.$ Then $\distiv = \distmodeliv.$
\end{restatable}
\else
\begin{restatable}{theorem}{ivtight}\label{thm:iv_tight}
Let $X,Y,Z$ be discrete random variables defined on $\cX,\cY,\cZ$ respectively, with $|\cX| = n\geq 2, |\cY|=2, |\cZ| =2$. Define $\distmodeliv \triangleq \left \lbrace P_{\model}(X,Y,Z) : \model \in \modeliv \right \rbrace$ and the set of joint distributions that satisfy the IV inequalities as \begin{align*}\distiv &\triangleq \left \lbrace P(X,Y,Z) : P(X,Y \mid Z)\text{ satisfies }\eqref{eq:iv} \right.\\
&\left.\text{ and } \forall z, P(Z=z)>0 \right \rbrace.\end{align*}  Then $\distiv = \distmodeliv.$
\end{restatable}
\fi
For the Berkeley admissions case,
the observational distribution satisfying the IV inequalities implies that there exists a causal explanation (model) where the directed edge $\sex \rightarrow \outcome$ is absent, i.e., given that $P(\outcome,\dept, \sex) \in \distiv$, there exists $\model \in \modeliv$ such that $P_{\model}(\outcome,\dept,\sex) = P(\outcome,\dept,\sex)$. On the other hand, the observational distribution violating the IV inequalities does not necessarily imply that the edge $\sex \rightarrow \outcome$ is present since the IV model class, $\modeliv$, is only a subset of all the models that do not contain the edge $\sex \rightarrow \outcome$ in the causal graph. For example, the existence of latent confounding between $\sex$ and $\outcome$ in a model $\model$ may result in $\model \notin \modeliv$, even though $\cg{\model}$ does not necessarily contain the directed edge $\sex \rightarrow \outcome$. However, the causal modeling assumption that defined $\modelsedge$ rules out latent confounding between $\sex$ and $\outcome$. Therefore, given our modeling assumptions, $\nullgraph = \modeliv$, and in turn, we conclude that violating the IV inequalities implies that $\model \in \modelsedge \backslash \nullgraph$. As in the previous section, it is possible that causal models that lie outside $\nullgraph$  (``unfair'' models) induce observational distributions that lie in $\distiv$, i.e., satisfy the IV inequalities. Therefore, satisfying the IV inequalities is not conclusive evidence that the data-generating mechanism is fair, i.e., our conclusion should be that fairness is undecidable. In Section~\ref{sec:bayesiantest} we introduce a Bayesian test for the IV inequalities.

\subsection{Bounds on Interventional Notion of Fairness}\label{subsec:bounds}
For $\model \in \modelsedge$, the interventional notion of fairness is equivalent to a vanishing CDE, where the latter is not identifiable in our case.
By a response-function parameterization \citep{Balke95, BalkePearl97} of $\model \in \modelsedge$, we can express the interventional distributions in Definition~\ref{def:notions-cf} as a linear function of response variables. Further, the observational distribution is also expressed as a linear function of the response variables. Using the symbolic linear programming approach of \citet{Balke95}, we obtain upper and lower bounds in terms of the observational distribution, specifically, $P_{\model}\Paren{\outcome,\dept \mid \sex}$. Indeed, \citet{CaiKPT08} express the same bounds which we reproduce below. The CDE
\ifdefined\SINGLE
\begin{equation*}P_{\model}\Paren{\outcome=1\mid \doop{\sex=1}, \doop{\dept=\ldept}} - P_{\model}\Paren{\outcome=1\mid \doop{\sex=0}, \doop{\dept=\ldept}},
\end{equation*}
\else
\begin{align*}
&P_{\model}\Paren{\outcome=1\mid \doop{\sex=1}, \doop{\dept=\ldept}}\\
&- P_{\model}\Paren{\outcome=1\mid \doop{\sex=0}, \doop{\dept=\ldept}},
\end{align*}
\fi
lies in the interval
\ifdefined\SINGLE
\begin{align*}
    &\left[ \Pr\left(\outcome=1,\dept=\ldept\mid\sex=1\right) + \Pr\left(\outcome=0,\dept=\ldept\mid \sex=0\right) - 1,\right.\\
    & \left.1 - \Pr\left(\outcome=0,\dept=\ldept\mid\sex=1\right) - \Pr\left(\outcome=1,\dept=\ldept\mid\sex=0\right)\right].
\end{align*}
\else
\begin{align*}
    &\left[ \Pr\left(\outcome=1,\ldept\mid \sex=1\right) + \Pr\left(\outcome=0,\ldept\mid\sex=0\right) - 1,\right.\\
    & \left.1 - \Pr\left(\outcome=0,\ldept\mid\sex=1\right) -\Pr\left(\outcome=1,\ldept\mid\sex=0\right)\right].
\end{align*}
\fi 
For the interventional notion of fairness, the CDE must be $0$ for all $\ldept$. By setting the lower bound to be at most $0$ and the upper bound to be at least $0$, we recover the IV inequalities in \eqref{eq:iv}. While \citet{CaiKPT08} do not point out the connection to the IV inequalities, they find it ``remarkable that we [they] get such a simple formula, consisting of only one additive expression in the lower bound and one additive expression in the upper bound''. In the next subsection, we show that the connection to the IV inequalities is not a coincidence.   

\subsection{A Family of Equivalent Tests}\label{subsec:equivalence}


The graphical and interventional fairness notions end up imposing identical constraints on the observational distribution.
However, note that $\nullinter \supseteq \nullgraph$. In fact, we prove in Section~\ref{app:nested-cf} that $ \nullgraph = \nullctrf \subset \nullinter$. Models in $\nullinter \backslash \nullgraph$ (Example~\ref{ex:unconfexample} with $\varepsilon=\frac{1}{2}$) are such that the edge $\sex \rightarrow \outcome$ exists in the causal graph and yet, the interventional queries in Definition~\ref{def:notions-cf} are equal.
Given that the null hypothesis of the interventional fairness notion is a strict superset of that of the graphical fairness notion, we might expect the same relation to hold in the resulting set of observational distributions for these hypotheses, thus giving us potentially different tests. In contrast, like in Section~\ref{sec:modeling}, we show that the corresponding sets of observational distributions resulting from models in $\nullinter$, $\nullctrf$, $\nullgraph$ are identical. Section~\ref{app:equiv-cf} contains the proof.

\begin{restatable}{theorem}{confequiv}\label{thm:equivalence}
Let 
\begin{align*}
\distgraph &\triangleq \left \lbrace P_{\model}\Paren{\dept,\outcome, \sex} : \model \in \nullgraph \right \rbrace, \\
\distinter &\triangleq \left \lbrace P_{\model}\Paren{\dept,\outcome, \sex} : \model \in \nullinter \right \rbrace, \\
\distctrf &\triangleq \left \lbrace P_{\model}\Paren{\dept,\outcome, \sex} : \model \in \nullctrf \right \rbrace.
\end{align*}
Then $\distinter = \distctrf = \distgraph = \distiv,$ where $\distiv$ is defined in Theorem~\ref{thm:iv_tight}.
\end{restatable}

In summary, testing for the graphical, interventional and counterfactual notions of fairness, with confounding, all boil down to testing the IV inequalities.

\subsection{Comparison With Existing Fairness Notions}

The utility of considering statistical tests is that we can now compare different fairness notions for a particular case with respect to the same causal modeling assumptions. In this section, we consider the three existing counterfactual fairness notions, namely the NDE \citep{NabiShpitser18, Chiappa19}, and the counterfactual and path-dependent counterfactual fairness notions in \citet{KusnerLRS17}. 

For NDE, \citet{KaufmanKMGP05} obtain bounds for the all-binary setting. Using these bounds, we obtain a strictly weaker test than the IV inequalities.\footnote{Intuitively, the reason is the same as in Section 3; the NDE averages over departments, and a positive bias in one department may cancel out against a similarly strong negative bias in another department. Hence, vanishing NDE does not imply that each department takes fair decisions.} We show in Section~\ref{app:ctrfkusner-cf} that the counterfactual notion of fairness of \citet{KusnerLRS17} implies demographic parity even when confounding is allowed. In Section~\ref{app:pathwise-cf} we show that testing the path-dependent counterfactual fairness notion of \citet{KusnerLRS17} is equivalent to testing the IV inequalities.

\section{Bayesian Testing Procedure} \label{sec:bayesiantest}

We start with proposing a Bayesian test for IV inequalities using finite data. Although  \citet{Ramsahai08} proposed a frequentist test and derived the distribution of the likelihood ratio, providing a means to obtain a p-value for the case of a binary treatment, it is unclear what the test would be for our case where the ``treatment'' is not binary. In addition, \citet{WangRR17} provide a frequentist test that involves multiple one-sided independence tests. In contrast, the Bayesian test has straightforward extensions to other hypotheses that are expressed in terms of the observational distribution, including bounds on unidentifiable causal queries.

Consider discrete random variables $X,Y,Z$ where $|\cX|=n, |\cY|=2, |\cZ|=2$. The observational distribution $P(X,Y,Z)$ lies in the $4n-1$ dimensional simplex, denoted by $\Delta$, which is a subset of $\RR^{4n}$. We consider a Bayesian model selection procedure where 
\begin{align*}
    \MM_0 &= \left \lbrace \theta \in \Delta : \theta \text{ satisfies the IV inequalities} \right \rbrace, \\
    \MM_1 &= \left \lbrace \theta \in \Delta : \theta \text{ does not satisfy the IV inequalities} \right \rbrace.
\end{align*}
Note that $\MM_0 \,\dot{\cup}\, \MM_1 = \Delta$. Given a finite dataset of $(X,Y,Z)$ tuples, denoted by $R_1, R_2, \cdots, R_m$, and choices of prior distributions for the models, $\pi\Paren{\theta\mid\MM_0},\pi\Paren{\theta\mid\MM_1}$, we report a confidence interval for the posterior probability of satisfying the IV inequalities, i.e., $P\Paren{\MM_0\mid R_1,R_2, \cdots, R_m} = \int_{\theta \in \MM_0} P\Paren{\theta \mid R_1,R_2, \cdots, R_m} d\theta$. Given the posterior density $P\Paren{\theta \mid R_1,R_2, \cdots, R_m}$, we estimate the posterior probability of $\MM_0$ by IID sampling from the posterior density $n$ times and counting how often the sample satisfies the IV inequalities, which we denote by $N$. Since $N$ is a binomial random variable with parameters $n$ and $P\Paren{\MM_0\mid R_1,R_2, \cdots ,R_m}$, a confidence interval on $P\Paren{\MM_0 \mid R_1,R_2, \cdots R_m}$ is readily obtained by the Clopper-Pearson method \citep{ClopperPearson34}.

\noindent\textbf{Results on Berkeley admission data: }We use the \texttt{UCBAdmissions} \citep{UCBAdmissions} dataset from $\texttt{R}$ that contains counts for each sex-department-admissions outcome tuple for the $6$ largest departments. Therefore, $|\cX| = 6$, $|\cY|=2, |\cZ|=2$. Since the data satisfies the positivity for sex, we can use the Bayesian model selection procedure. For parameters \ifdefined\SINGLE $$\theta = \Paren{ P(d,a,s): s \in \cX_S, d \in \cX_D, a \in \cX_A},$$ \else $\theta = \Paren{ P(d,a,s): s \in \cX_S, d \in \cX_D, a \in \cX_A}$,\fi we choose a flat Dirichlet prior over $\Delta$ giving us $\pi(\theta | \MM_i)= c_i \text{Dir}\Paren{1,1,\cdots,1}\bm{1}\left[ \theta \in \MM_i \right]$ where $c_i$ is a normalizing constant. The counts from the data are used to obtain the posterior, $P(\theta \mid R_1, R_2, \cdots, R_m)$ which is also a truncated Dirichlet distribution. Using $n=10^6$ samples, we observe no violations of the IV inequalities. Therefore, the confidence interval for the posterior probability of the Berkeley data satisfying the IV inequalities is $\left[1-3.69\times 10^{-6},1\right]$. As mentioned in Section~\ref{subsec:graph_iv} satisfying the IV inequalities implies that fairness is undecidable. In Section~\ref{app:bayes}, we carry out a sensitivity analysis by varying the chosen prior. We also report results on a different dataset from \citet{Bol23} that investigates sex-based discrimination in awarding cum-laude distinctions to graduate students.

In addition to the Bayesian test, the maximum likelihood (ML) estimator satisfies the IV inequalities, implying that there isn't enough evidence to reject the null hypothesis when doing a likelihood ratio test. An implementation of \citet{WangRR17} for the Berkeley dataset also does not reject the null hypothesis (see Section~\ref{app:bayes} for details). The code can be found at {\small\texttt{https://github.com/SourbhBh/BerkeleyCode}}.


\section{Discussion}\label{sec:discussion}
The Berkeley admissions case is a canonical example in the causal fairness literature. \citet{BickelHO75} reached the conclusion of there being no evidence to reject fairness, although under the unrealistic assumption of no unobserved confounding. When allowing for unobserved confounding, we arrived at a different conclusion: since there is very strong evidence that the data satisfies the IV inequalities, it is undecidable from the available data whether the admission procedure was fair or discriminated against sex.

While our analysis was centered around the Berkeley case, there are multiple aspects that generalize---a) \ifdefined \SINGLE The family of causal models we consider \else $\modelsedgerelax$ \fi can be thought of as a mediator with a confounder between mediator and outcome, which is common in mediation analysis. b) The approach of fairness notions being causal hypotheses, with respect to the class of models defined by modeling assumptions, that need to be translated into statistical tests to be useful in practice. c) The observation that for the case of inequality constraints on observational data, a straightforward Bayesian testing procedure is available.

\textbf{Generalization: Non-binary variables}: While the Berkeley dataset had a binary protected attribute and a binary decision outcome, our approach can be generalized to non-binary variables. The response-function parametrization of $\modeliv$ can be used to characterize the set of induced observed distributions of $X,Y$ and $Z$ as a convex polyhedral set. Using computer algebra, this polyhedral set can be characterized by linear inequality constraints even when the alphabet sizes are arbitrary. However, the number of linear inequality constraints quickly explodes for non-binary $Z$ (e.g. for binary $X$ and $Y$, and for $|\cZ|=2,3,4$
 and $5$, we get $12,48,160$ and $420$ 
 inequalities). Nevertheless, these sets of inequality constraints can be tested using a Bayesian testing procedure akin to the one we outline in Section~\ref{sec:bayesiantest},
 thus giving a statistical test for all our fairness notions since the statement of Theorem~\ref{thm:equivalence} holds for any alphabet size.


\textbf{Generalization: Relaxing unobserved confounding assumptions} In Section~\ref{sec:confounder} we made the assumption of no unobserved confounding between $S$ and any other variable. In Section~\ref{app:SDconf} we show that our main theorems, Theorem~\ref{thm:iv_tight} and Theorem~\ref{thm:equivalence} hold even in the case of allowing for confounding between $\sex$ and $\dept$. If unobserved confounding is allowed between $\sex$ and $\outcome$, the set of observational distributions induced by causal models where the direct effect between sex and admissions outcome is absent is the entire simplex which is the same as the set of observational distributions induced by causal models where the direct effect between sex and admissions outcome is present. Therefore, from observational data, it is not possible to distinguish between the presence/absence of a direct effect from sex to admissions outcome if we allow for confounding between sex and admissions outcome.

\textbf{Unmeasured Mediators}: Our conclusions and fairness notions are with respect to the measured variables. The presence of unmeasured mediators could change the interpretation of the results. For example, an unmeasured mediator that is not a `protected' variable, such as choice of undergraduate department, would result in a direct effect of sex on admissions outcome in the marginalized causal model, but might still be considered `fair'. Therefore, even if the data does not satisfy the IV inequalities, we can only claim the existence of the direct effect from sex to admissions outcome and not whether there is unfairness. However, if the existence of latent unprotected mediators is ruled out, then we can consider the existence of a direct effect as `unfair'. 

\textbf{Undecidability}:
Our analysis can be viewed as a Bayesian model comparison between causal models (i) without a direct effect of $S$ on $A$, vs.\ (ii) with a direct effect of $S$ on $A$. If we only have access to observational data then causal models in (ii) result in a saturated model in the observed distribution space, i.e., the set of induced observational distributions is the entire simplex, whereas for the causal models in (i), the IV inequalities hold. Therefore, satisfying the IV inequalities implies that fairness is `undecidable' since causal models in both (i) and (ii) satisfy the IV inequalities. Violating the IV inequalities, however, would imply that there is a direct effect of sex on admissions outcome. Only with the additional modeling assumption of there being no unprotected mediators between $S$ and $A$, could it then be concluded that the admissions process was `unfair'.

\textbf{Selection Bias: } \texttt{UCBAdmissions} dataset only has data from the $6$ largest departments as opposed to $85$ in \cite{BickelHO75}. Also, the fraction of female students is significantly smaller than the fraction of male students. Hence, it is plausible that (latent) selection mechanisms alter the causal model resulting in violating the assumptions of, for instance, absence of a bidirected edge in the causal graph between $\sex$ and $\outcome$ \citep{ChenZM24}. Since allowing for selection bias enlarges the model class $\modelsedge$, and given that the data satisfies the IV inequalities, we conclude that allowing for selection bias will not change our conclusion. 

\begin{acknowledgements}
This work was supported by Booking.com. We thank an anonymous reviewer for their feedback which helped us modulate some of our conclusions.
\end{acknowledgements}

\clearpage

\bibliography{biblio.bib}

\newpage

\onecolumn

\title{Revisiting the Berkeley Admissions data: Statistical Tests for Causal Hypotheses\\(Supplementary Material)}
\maketitle
\appendix

\section{Additional Preliminaries}
\begin{definition}[Twin SCM]
    Let $\model = \Paren{V,W,\cX,P,f}$ be an SCM. The twinning operation maps $\model$ to the \textbf{twin SCM}
    \begin{equation*}
        \model^{\text{twin}} \triangleq \Paren{V \cup V',W,\cX_V \times \cX_{V'} \times \cX_W, P,\tilde{f}}
    \end{equation*}
    where $V' = \left \lbrace v': v \in V \right \rbrace$ is a disjoint copy of $V$ and the causal mechanism $\tilde{f}:\cX_V \times \cX_{V'} \times \cX_W \mapsto \cX_V \times \cX_{V'}$ is given by $\tilde{f}\Paren{x_V,x_{V'},x_W} = \Paren{f(x_V,x_W),f(x_{V'},x_W)}$.
\end{definition}

\begin{definition}[Solution function]
Let $\model = \Paren{\enop, \exrv, \spc, f,P}$ be an acyclic SCM and $C \subseteq \enop$. A \textbf{solution function} of $\model$ with respect to $C$ is a measurable mapping $g_C : \cX_{\enop \backslash C} \times \spc_{\exrv} \mapsto \spc_{C}$ that satisfies the structural equations for $C$, i.e., for all $x_{\enop \backslash C} \in \spc_{\enop \backslash C}$, $P(X_{\exrv})$-a.a $x_{\exrv} \in \spc_{\exrv}$, 
\begin{equation*}
    g_{C}\Paren{x_{\enop\backslash C},x_{\exrv}} = f_{C}\Paren{x_{\enop \backslash C},g_{C}\Paren{x_{\enop\backslash C},x_{\exrv}},x_{\exrv}}.
\end{equation*}
\end{definition}

\begin{definition}[Markov kernels]
    Let $\mathcal{T}$ and $\mathcal{W}$ be measurable spaces. A Markov kernel is defined as a measurable map $K : \mathcal{T} \mapsto \mathcal{P}\Paren{\mathcal{W}}$ where $\mathcal{P}\Paren{\mathcal{W}}$ is defined as the space of probability measures on $\mathcal{W}$.
\end{definition} 

\section{IV Inequalities Expressed as Markov Kernels}\label{app:iv}
For \eqref{eq:iv} to be well defined, we required that $P_{\model}(Z=z) > 0$ for all $z$ for any $\model \in \modeliv$ (see Definition \eqref{def:IVmodelclass}). In this section, we relax this requirement by noting that, in fact, IV inequalities are more appropriately expressed in terms of $P_{\model}\Paren{X,Y \mid \doop{Z}}$. 
\begin{lemma}\label{lem:iv_mk}
Let $\modelivrelax \triangleq \left\lbrace \model: G(\model) \text{ is a subgraph of Figure }\ref{fig:iv} \right \rbrace$. For any $\model \in \modelivrelax$, 
\begin{equation}\label{eq:iv_MK}
     \max_{x} \sum_{y} \max_{z} P_{\model}\Paren{X=x,Y=y\mid \doop{Z=z}} \leq 1. 
\end{equation}
\end{lemma}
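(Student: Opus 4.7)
The plan is to exploit the fact that in the IV DAG there is no direct arrow from $Z$ to $Y$, so under $\doop{Z=z}$ the outcome $Y$ depends on $z$ only through $X$. This lets us rewrite the interventional probability in terms of the exogenous noise $W$ and then apply a simple disjointness argument on the fibers of $f_Y(x,\cdot)$.

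First, because $G(\model)$ is a subgraph of Figure \ref{fig:iv} and hence acyclic, the solution function definition above yields $X = f_X(Z, W)$ and $Y = f_Y(X, W)$. The key point is that $f_Y$ has no $Z$-argument since no $Z \to Y$ edge is present in the IV DAG. Under the intervention $\doop{Z=z}$ the post-interventional mechanism becomes $X = f_X(z, W)$, $Y = f_Y(f_X(z, W), W)$, and so
\begin{equation*}
    P_{\model}(X=x, Y=y \mid \doop{Z=z}) = P\Paren{f_X(z, W) = x,\ f_Y(x, W) = y},
\end{equation*}
using that $f_Y(f_X(z,W),W) = f_Y(x,W)$ on the event $\{f_X(z,W)=x\}$.

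Second, for each fixed $x$ and $y$, dropping the $\{f_X(z,W)=x\}$ constraint inside the probability gives
\begin{equation*}
    \max_z P_{\model}(X=x, Y=y \mid \doop{Z=z}) \leq P\Paren{f_Y(x, W) = y}.
\end{equation*}
Since $f_Y(x, \cdot)$ is a function of $W$, the events $\{f_Y(x, W) = y\}$ for distinct $y$ are pairwise disjoint, so $\sum_y P(f_Y(x, W) = y) \leq 1$. Taking the maximum over $x$ preserves this bound and yields \eqref{eq:iv_MK}.

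The main obstacle is really step one: extracting clean solution functions $f_X$ and $f_Y$ with the right argument list for an \emph{arbitrary} subgraph of Figure \ref{fig:iv}. In degenerate cases (e.g.\ the $Z \to X$ edge absent), $f_X$ has no $Z$-dependence and the interventional probability is constant in $z$; in cases where $X \to Y$ is absent, $f_Y(x, W)$ collapses to a function of $W$ alone; in all such situations the disjointness argument still applies verbatim, so the proof goes through uniformly across subgraphs without case analysis.
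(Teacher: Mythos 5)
Your proof is correct and follows essentially the same route as the paper's: rewrite $P_{\model}(X=x,Y=y\mid \doop{Z=z})$ as $P(f_X(z,W)=x,\, f_Y(x,W)=y)$, drop the $f_X$ constraint to bound it by $P(f_Y(x,W)=y)$ uniformly in $z$, and observe that these probabilities sum to at most (in fact exactly) $1$ over $y$. The paper phrases the last step as $\sum_y P_{\model}(Y=y\mid\doop{X=x})=1$ rather than via disjointness of fibers, but this is the same observation.
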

\begin{proof}
    Since 
    \begin{align*}
        P_{\model}(X=x,Y=y \mid \doop{Z=z}) &= P_{\model}(f_X(z,U)=x,f_Y(x,U)=y) \\
        &\le P(f_Y(x,U)=y) = P_{\model}(Y=y \mid \doop{X=x}), 
    \end{align*}
\begin{equation}
    \max_{x} \sum_{y} \max_{z} P_{\model}\Paren{X=x,Y=y\mid \doop{Z=z}} \leq \max_{x} \sum_{y} P_{\model}(Y=y \mid \doop{X=x}) = 1.
\end{equation}
\end{proof}
Note that \eqref{eq:iv_MK} is defined even when $ \exists z \in \cZ$ such that $P(Z=z) = 0$. In contrast, positivity must be assumed in \eqref{eq:iv} for the terms to be well-defined. Further, if positivity is assumed, then $\modelivrelax = \modeliv$ and \eqref{eq:iv_MK}  is identical to \eqref{eq:iv}.

\section{Proofs for Section 3}
\subsection{Nested Fairness Notions: Without Confounding}\label{app:nested-nocf}
\unconfnested*
\begin{proof}
\bm{$\nullgraphunconf = \nullctrfunconf$}: We first show that $\nullgraphunconf \subseteq \nullctrfunconf$. $\model \in \nullgraphunconf$ implies $\forall \ldept$, $f_A(s,d,U_A)$ is constant in $\lsex$ $P$-a.s. Therefore, for all $\ldept, \lsex$, \begin{equation}\label{eq:nocf-ctrf}
P_{\model}\Paren{f_A(s,d,U_A)=f_A(S,d,U_A)}=1.
\end{equation}
Therefore, $\model \in \nullctrfunconf$. For the converse, $\model \in \nullctrfunconf$ implies \eqref{eq:nocf-ctrf}. For $s\neq s'$, and all $d$,
\begin{equation*}
P_{\model}\Paren{f_A(s,d,U_A)=f_A(S,d,U_A)}=P_{\model}\Paren{f_A(s,d,U_A)=f_A(s',d,U_A)}P_{\model}\Paren{S=s'} +P_{\model}\Paren{S=s}.
\end{equation*}
From \eqref{eq:nocf-ctrf} if $P_{\model}\Paren{S=s'} > 0$, we conclude $P_{\model}\Paren{f_A(s,d,U_A)=f_A(s',d,U_A)} = 1$. If $P_{\model}\Paren{S=s'} = 0$, since \eqref{eq:nocf-ctrf} holds for $s'$, i.e., for all $d,s'$, $P_{\model}\Paren{f_A(s',d,U_A)=f_A(S,d,U_A)}=1$, we have $$P_{\model}\Paren{f_A(s',d,U_A)=f_A(s,d,U_A)} = 1.$$ Therefore, $\model \in \nullgraphunconf$.

\bm{$\nullctrfunconf \subset \nullinterunconf$}: For $\model \in \nullctrfunconf$, \eqref{eq:nocf-ctrf} implies $P_{\model}\Paren{f_A(s,d,U_A)} = P_{\model}\Paren{f_A(S,d,U_A)}$ for all $d,s$, implying $\model \in \nullinterunconf$. Since Example~\ref{ex:unconfexample} belongs to  $\nullinterunconf \backslash \nullctrfunconf$, the inclusion is strict.

\bm{$\nullinterunconf \subset \nullobsunconf$}: For $\model \in \nullinterunconf$, $P_{\model}\Paren{A=1 \mid \doop{D=d},\doop{S=s}}$ is constant in $s$. Consider a pair $\lsex,\ldept,$ such that, for $\model \in \nullinterunconf$, $P_{\model}\Paren{\lsex,\ldept}>0$. Then 
\begin{equation}\label{eq:inter-obs-nocf}
P_{\model}\Paren{A=1 \mid \doop{D=d},\doop{S=s}} = P_{\model}\Paren{A=1 \mid D=d,S=s}.
\end{equation}
Note that, if, for $\lsex' \neq \lsex$, $P_{\model}\Paren{\lsex',\ldept}=0$, then $P_{\model}\Paren{A=1 \mid D=d,S=s} = P_{\model}\Paren{A=1 \mid D=d}$. If instead, $P_{\model}\Paren{\lsex',\ldept}>0$, then from \eqref{eq:inter-obs-nocf} for $S=\lsex'$, we have that $$P_{\model}\Paren{A=1 \mid D=d,S=s} =P_{\model}\Paren{A=1 \mid D=d,S=s'} = P_{\model}\Paren{A=1 \mid D=d}.$$ Therefore, we conclude that $\model \in \nullobsunconf$ implying $\nullinterunconf \subseteq \nullobsunconf$. Since the SCM in Example~\ref{ex:posexample} lies in $\nullobsunconf \backslash \nullinterunconf$, $\nullinterunconf \subset \nullobsunconf$.

If for all $s,d$, $P_{\model}\Paren{s,d} > 0$, then for $\model \in \nullobsunconf$, $$P_{\model}\Paren{A=1 \mid D=d,S=s} =P_{\model}\Paren{A=1 \mid \doop{D=d},\doop{S=s}}$$ is constant in $s$ and equal to $P_{\model}\Paren{A=1 \mid \doop{D=d}}$. This implies $\model \in \nullinterunconf$. Therefore, if for all $s,d$, $P_{\model}\Paren{s,d} > 0$, then $\nullinterunconf = \nullobsunconf$.


\end{proof}

\subsection{Equivalence of Tests Without Confounding}\label{app:equiv-nocf}

\unconfequiv*

\begin{proof}
    From Lemma~\ref{lem:notion_equiv}, $\distgraphunconf = \distctrfunconf \subseteq \distinterunconf \subseteq \distobsunconf$. Therefore, it suffices to prove that $\distgraphunconf = \distobsunconf$. For every $P_{\model} \in \distobsunconf$, $$P_{\model}\Paren{\outcome, \sex,\dept} = P_{\model}\Paren{\sex} \otimes P_{\model}\Paren{\dept \mid \sex} \otimes P_{\model}\Paren{\outcome \mid \dept}.$$ Hence, $\exists \tilde{\model} \in \nullgraphunconf$ such that $P_{\model}\Paren{A,D,S} = P_{\tilde{\model}}\Paren{A,D,S}$. 
\end{proof}

\section{Proofs for Section 4}
\subsection{Sharpness of IV inequalities}\label{app:ivsharp}
\begin{figure}[th]
     \centering
            \begin{tikzpicture}
            \tikzstyle{vertex}=[circle,fill=none,draw=black,minimum size=17pt,inner sep=0pt]
\node[vertex] (Z) at (0,0) {$Z$};
\node[vertex][fill=lightgray] (U) at (0,1) {$U_Z$};
\node[vertex] (Y) at (3,0) {$Y$};
\node[vertex] (X) at (1.5,0) {$X$};
\node[vertex][fill=lightgray] (R) at (2.2,1) {$R$};
\path (Z) edge (X);
\path (X) edge (Y);
\path (U) edge (Z);
\path (R) edge (X);
\path (R) edge (Y);

            \end{tikzpicture}
        \caption{Response-function parameterization of $M \in \modeliv$} 
        \label{fig:proof_iv}
        \end{figure}

\ivtight*

\begin{proof}We prove a more general statement that includes Theorem~\ref{thm:iv_tight} as a special case.

\begin{lemma}\label{lem:mk_iv_tight}
 Let $X,Y,Z$ be discrete random variables defined on $\cX,\cY,\cZ$ respectively, with $|\cX| = n\geq 2, |\cY|=2, |\cZ| =2$. Define $\mkiv \triangleq \left \lbrace K(X,Y \mid Z) : K(X,Y \mid Z)\text{ satisfies }\eqref{eq:iv_MK} \right \rbrace$. Define $\mkmodeliv \triangleq \left \lbrace P_{\model}\Paren{X,Y \mid \doop{Z}} : \model \in \modelivrelax \right \rbrace.$ Then $\mkiv = \mkmodeliv.$   
\end{lemma}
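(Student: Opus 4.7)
The plan is to prove both inclusions. The forward direction $\mkmodeliv \subseteq \mkiv$ follows immediately from Lemma~\ref{lem:iv_mk}. The substantive reverse direction $\mkiv \subseteq \mkmodeliv$ asserts the sharpness of the IV inequality.

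For sharpness, given any $K \in \mkiv$, I will construct an SCM $\model \in \modelivrelax$ realizing $K$ via the response-function parameterization of Figure~\ref{fig:proof_iv}: latent $R = (R_X, R_Y)$ with $R_X \in \cX^{\cZ}$ and $R_Y \in \cY^{\cX}$, structural equations $X = R_X(Z)$ and $Y = R_Y(X)$, and $Z$ driven by independent exogenous $U_Z$ with arbitrary marginal (irrelevant to the interventional distribution). Under this SCM,
\[
P_\model(X = x,\, Y = y \mid \doop{Z = z}) = \mu\bigl(\{(r_X, r_Y) : r_X(z) = x,\, r_Y(x) = y\}\bigr),
\]
where $\mu$ is the joint distribution of $R$. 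The task thus reduces to constructing a distribution $\mu$ on $\cX^{\cZ} \times \cY^{\cX}$ whose relevant marginals match $K(x, y \mid z)$ for every triple $(x, y, z)$.

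I will construct $\mu$ by explicit allocation. For each $(x, y)$, let $\alpha_{x, y} = \max_z K(x, y \mid z)$; \eqref{eq:iv_MK} gives $\sum_y \alpha_{x, y} \le 1$ for each $x$. First place mass $\min_z K(x, y \mid z)$ on the response function with $r_X \equiv x$ constant and $r_Y(x) = y$, which contributes to $\{R_X(z) = x, R_Y(x) = y\}$ at every $z$. For each $z$, place additional mass $K(x, y \mid z) - \min_{z'} K(x, y \mid z')$ on a response function with $r_X(z) = x$, $r_X(z') \ne x$ for $z' \ne z$, and $r_Y(x) = y$. Finally, distribute the residual mass $1 - \sum_y \alpha_{x, y}$ per $x$ to response functions that never output $x$. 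Summing over $z$-contributions confirms that the marginals at each $z$ are reproduced exactly.

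The main obstacle is consistency: ensuring the pieces allocated for different $(x, y)$ pairs combine into a coherent probability distribution $\mu$ with all marginals matching $K$, rather than oversubscribing shared response functions. The IV inequality $\sum_y \alpha_{x, y} \le 1$ is exactly the slackness condition that makes this feasible — it prevents the mass demanded across different $y$'s at a fixed $x$ from exceeding the unit budget, so the joint distribution of $R_Y$ conditional on each $R_X$ realization can be chosen to be a valid probability measure on $\cY^{\cX}$. The remaining verification is a bookkeeping argument using~\eqref{eq:iv_MK} together with $\sum_{x, y} K(x, y \mid z) = 1$, in the spirit of classical response-function arguments for IV polytopes.
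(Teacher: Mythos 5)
Your forward direction is fine and matches the paper. The reverse direction, however, has a genuine gap: the explicit allocation you propose does not produce a probability measure in general, and the ``consistency'' issue you flag at the end is not bookkeeping --- it is the entire content of the lemma. Concretely, your scheme assigns total mass $\sum_{x,y}\bigl[\min_z K(x,y\mid z) + \sum_z (K(x,y\mid z)-\min_{z'}K(x,y\mid z'))\bigr] = 2-\sum_{x,y}\min_z K(x,y\mid z)$, which exceeds $1$ whenever the two strata differ; to fix this, each ``switching'' response function must simultaneously serve a residual of some $(x,y)$ at $z=0$ and a residual of some $(x',y')$ at $z=1$ with $x'\neq x$, i.e.\ you must solve a transportation problem coupling the two residual measures with the constraint that mass at $(x,y)$ in one stratum is never matched to $(x,y')$ with $y'\neq y$ in the other. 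That transportation problem can be infeasible even when \eqref{eq:iv_MK} holds \emph{after} your diagonal step has been committed. Take $n=2$ and $K(0,0\mid 0)=\tfrac12$, $K(1,0\mid 0)=K(1,1\mid 0)=\tfrac14$, $K(0,1\mid 1)=\tfrac12$, $K(1,0\mid 1)=K(1,1\mid 1)=\tfrac14$, all other entries zero; this satisfies \eqref{eq:iv_MK} with equality at $x=0$. Your first step places mass $\tfrac12$ on constant response functions $r_1\equiv 1$, and the remaining residuals ($\tfrac12$ for $(0,0)$ at $z=0$ and $\tfrac12$ for $(0,1)$ at $z=1$) share the value $x=0$ with different $y$'s, so they cannot be coupled and the total mass is $\tfrac32$. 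The point \emph{is} realizable --- e.g.\ put mass $\tfrac14$ on each of $(r_1,r_2)$ with $r_1=(0,1),r_2(0)=0$ and $r_1=(1,0),r_2(0)=1$, with $r_2(1)\in\{0,1\}$ split evenly --- but only by \emph{not} placing the $\min_z$ mass on constant response functions, so the greedy diagonal-first allocation is unsalvageable as stated.

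The paper takes a different route that sidesteps this feasibility question entirely: $\mkmodeliv$ is the image of the probability simplex over response functions under a linear map, hence convex, so it suffices to realize the extreme points of the polytope $\mkiv$. Lemma~\ref{lem:extremepointsiv} enumerates those extreme points by a linear-independence/active-constraint case analysis, and each one is hit by a single deterministic response function. If you want to keep a direct construction, you would need to prove feasibility of the constrained coupling above for \emph{every} $K$ satisfying \eqref{eq:iv_MK} (a Hall-type / max-flow argument over all allocations, not just the greedy one), which is essentially as much work as the vertex enumeration.
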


Note that $\distiv = \left \lbrace P(Z): \forall z, P(Z=z)>0  \right \rbrace \otimes \mkiv$ since assuming positivity, \eqref{eq:iv} is identical to \eqref{eq:iv_MK}. Further, $\distmodeliv = \left\{ P_{\model}(Z): \model \in \modeliv \right\} \otimes \mkmodeliv$ since assuming positivity, $\modelivrelax = \modeliv$ and $P_{\model}\Paren{X,Y \mid \doop{Z}} = P_{\model}\Paren{X,Y \mid Z}$ for $\model \in \modeliv$. Since the first factors are identical, Theorem~\ref{thm:iv_tight} follows from Lemma~\ref{lem:mk_iv_tight}.
\end{proof}

\begin{proof}[Proof of Lemma~\ref{lem:mk_iv_tight}]

For $\model \in \modelivrelax$, the response-function parameterization yields a counterfactually equivalent SCM \cite[Section 8.4]{ForreMooij25} $\tilde{\model} = (\enop,\tilde{\exrv},\tilde{\spc},\tilde{f},\tilde{P})$, where $\enop = \left \lbrace Z,X,Y \right \rbrace, \tilde{\exrv} = \left \lbrace \response, U_Z\right \rbrace, \tilde{\spc} = \spc_{\enop}\times\spc_{\tilde{\exrv}}, \tilde{f} = \Paren{\tilde{f}_{Z}, \tilde{f}_{X}, \tilde{f}_{Y}}$ where we define $\spc_{\response}, \tilde{f},\tilde{P}$ through the function $\Phi: \spc_{\exrv} \mapsto \spc_{\tilde{\exrv}}$ where
\begin{align*}
    \spc_{\response} &\triangleq \cX^{\cZ} \times \cY^{\cX},\\
    \forall u_Z,u_X,u_Y,u, \Phi\Paren{u_Z,u_X,u_Y,u} &\triangleq \Paren{\Paren{z \mapsto f_X(z,u,u_X),x \mapsto f_Y(x,u,u_Y)},u_Z},\\
    \forall u_Z, \tilde{f}_{Z}(u_Z) &\triangleq f_{Z}(u_Z),\\
    \forall \respfunc, z, \tilde{f}_{X}\Paren{\respfunc,z} &\triangleq \respfunc_1\Paren{z}, \\
    \forall \respfunc, x, \tilde{f}_{Y}\Paren{\respfunc,x} &\triangleq \respfunc_2\Paren{x},
    \end{align*}
where $\respfunc = \Paren{\respfunc_1,\respfunc_2}$ and $\tilde{P}$ is the push-forward distribution $\Phi_{*}(P)$.
Note that $\spc_{\response}$ is a discrete space, $\response$ a discrete random variable, and $\tilde{P}(\response)$  a discrete distribution over $\spc_{\response}$.

Under the response-function parameterization, only $\tilde{P}(R)$ is a parameter. We will consider $\tilde{P}(R)$ to be an element of $\mathbb{R}^{n_X^{n_Z} n_Y^{n_X}}$
where $\#\cX = n_X, \#\cZ = n_Z,\#\cY = n_Y$ and  $$\cK_{\tilde{\mathbb{M}}_{\text{IV}+}} \triangleq \left \lbrace P_{\tilde{\model}}(X,Y \mid \doop{Z}): \model \in \modelivrelax \right \rbrace$$ to be a subset of $\mathbb{R}^{n_X n_Y n_Z}$. Note that because of the counterfactual equivalence of the response-function parameterization, which in turn implies interventional equivalence, $\cK_{\tilde{\mathbb{M}}_{\text{IV,r}}} = \mkmodeliv$. From Lemma~\ref{lem:iv_mk}, $\mkmodeliv \subseteq \mkiv$. 

To show the converse, we show that each extreme point of $\mkiv$ is obtained by a point in $\mkmodeliv$. We enumerate all extreme points of $\mkiv$ in Lemma~\ref{lem:extremepointsiv}. We show that each such extreme point is obtained by the following response-function. Choose $x, x' \in \cX, y,y' \in \cY$ with $y=y'$ if $x=x'$. Then any response function satisfying 
\begin{align*}
    r_1(z) &= \begin{cases} 
    x \quad z=0 \\
x' \quad z=1
    \end{cases}\\
      r_2(\tilde{x}) &= \begin{cases} 
    y \quad \tilde{x}=x \\
y' \quad \tilde{x}=x'\\
\text{arbitrary} \quad \text{otherwise}
    \end{cases}
\end{align*}
gives all extreme points of $\mkiv$. Therefore, $\mkmodeliv \supseteq \mkiv$, implying $\mkmodeliv = \mkiv$. 

\end{proof}

\begin{lemma}\label{lem:extremepointsiv}
    Consider the real vector space $\mathbb{R}^{n_X n_Y n_Z}$ spanned by the canonical basis vectors $$\left \lbrace \delta_{x,y|z}: x \in \cX, y \in \cY, z \in \cZ\right \rbrace$$
    where $\delta_{x,y|z}$ denotes a unit vector of length $n_Xn_Yn_Z$ where all entries except the one at $(x,y,z)$ are zero. For $n_Y=n_Z=2, n_X = n\geq 2$, $\mkiv$ considered as a subset of this vector space is a polyhedral set with extreme points 
\begin{equation*}
   \EE = \left \lbrace \delta_{x,y|0} + \delta_{x',y'|1}: x,x' \in \cX; y,y' \in \cY : x\neq x' \right \rbrace \cup \left \lbrace \delta_{x,y|0} + \delta_{x,y|1}: x \in \cX, y\in \cY \right \rbrace.
\end{equation*}
\end{lemma}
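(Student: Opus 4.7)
The plan is to view $\mkiv$ as a polytope in $\mathbb{R}^{n_X n_Y n_Z} = \mathbb{R}^{4n}$ defined by two normalization equalities together with non-negativity and IV inequalities, and to characterize its extreme points in two parts: (i) show each $e \in \EE$ is a vertex, and (ii) show every $K \in \mkiv$ is a convex combination of elements of $\EE$.

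First, I would rewrite the IV constraint as a finite system of linear inequalities. Because $n_Y = n_Z = 2$, the max--max inequality for fixed $x$ unfolds as $K(x,0\mid z_0)+K(x,1\mid z_1)\le 1$ for each $(z_0,z_1)\in\{0,1\}^2$. The two diagonal cases $z_0=z_1$ follow from the per-$z$ normalization together with non-negativity, so the only non-trivial IV constraints are $K(x,0\mid 0)+K(x,1\mid 1)\le 1$ and $K(x,0\mid 1)+K(x,1\mid 0)\le 1$ for each $x\in\cX$. For part (i), feasibility of each $e\in\EE$ is a direct check; extremality follows because $e$ is $0/1$-valued, so any convex decomposition $e=\tfrac12(K_1+K_2)$ in $\mkiv$ must have $K_1,K_2$ vanishing on the zeros of $e$ by non-negativity, after which normalization pins down the remaining (at most two) entries to match $e$.

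For part (ii), I would cast the decomposition as a constrained coupling problem. Each $e\in\EE$ corresponds to a pair $((x_0,y_0),(x_1,y_1))\in A\times A$ where $A=\cX\times\{0,1\}$, subject to the restriction that $x_0\ne x_1$ or $(x_0,y_0)=(x_1,y_1)$; call these pairs \emph{allowed}. A decomposition of $K$ into elements of $\EE$ is then a probability measure $\lambda$ on $A\times A$ with first marginal $\mu := K(\cdot,\cdot\mid 0)$, second marginal $\nu := K(\cdot,\cdot\mid 1)$, and zero mass on the forbidden pairs $((x,y),(x,1-y))$. By max-flow--min-cut, such a $\lambda$ exists iff $\mu(B)\le \nu(N(B))$ for every $B\subseteq A$, where $N(B)=\bigcup_{(x,y)\in B}(A\setminus\{(x,1-y)\})$. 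For $B=\{(x,y)\}$ this condition is exactly $K(x,y\mid 0)+K(x,1-y\mid 1)\le 1$, which is a non-trivial IV inequality. For $|B|\ge 2$, any two distinct elements of $B$ forbid distinct targets, so $N(B)=A$ and the condition $\mu(B)\le 1$ is automatic. Hence feasibility of the coupling is equivalent to the IV inequalities, yielding $K=\sum \lambda(x_0,y_0,x_1,y_1)\,(\delta_{x_0,y_0\mid 0}+\delta_{x_1,y_1\mid 1})\in\mathrm{conv}(\EE)$.

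The main obstacle I expect is recognising the coupling formulation and verifying the Hall-type condition; the crucial observation is that the binariness of $\cY$ and $\cZ$ makes the forbidden-pair structure sufficiently sparse that every $|B|\ge 2$ neighbourhood equals all of $A$, collapsing Hall's condition down to exactly the IV inequalities. Once the coupling is produced, combining (i) and (ii) gives $\mkiv=\mathrm{conv}(\EE)$ with each $e\in\EE$ extreme, so the vertex set is exactly $\EE$.
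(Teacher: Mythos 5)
Your proof is correct, but it takes a genuinely different route from the paper's. The paper argues in the H-representation: it invokes the standard characterization of a vertex of a polyhedron as a feasible point at which $4n$ linearly independent constraints are active, verifies this for each element of $\EE$, and then rules out all other candidates by an explicit case analysis on the number $r$ of active IV inequalities ($r>2$, $r=2$ with and without overlap, $r=1$, $r=0$), using linear-dependence arguments among active constraints to kill the spurious cases. You instead work in the V-representation: extremality of each $e\in\EE$ follows from the support-plus-normalization argument for $0/1$ points, and the containment $\mkiv\subseteq\mathrm{conv}(\EE)$ is obtained by exhibiting, for each feasible $K$, a coupling of the two strata $\mu=K(\cdot,\cdot\mid 0)$ and $\nu=K(\cdot,\cdot\mid 1)$ supported off the forbidden pairs $((x,y),(x,1-y))$; the Gale--Hoffman/Hall feasibility condition for that transportation problem collapses to exactly the two nontrivial IV inequalities per $x$ because any two distinct sources forbid distinct targets, so $N(B)=A$ whenever $|B|\ge 2$. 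Both arguments are sound; yours is shorter once the coupling lemma is granted (you should cite or prove the max-flow--min-cut feasibility criterion explicitly, as it is the one nontrivial external ingredient), it is constructive, and it explains conceptually why the IV inequalities are precisely the facets --- they are the singleton Hall conditions. It also essentially subsumes the next step of the paper's development for free: your coupling $\lambda$ is nothing but a distribution over the relevant response functions $(r_1,r_2)$, so part (ii) simultaneously establishes $\mkiv\subseteq\mkmodeliv$ in Lemma~\ref{lem:mk_iv_tight} without passing through the extreme points at all. The paper's approach, by contrast, is fully self-contained elementary linear programming and requires no appeal to a coupling existence theorem. One small point to make explicit when writing this up: since $\EE$ is finite, the extreme points of $\mathrm{conv}(\EE)$ are contained in $\EE$, so combining (i) and (ii) gives that the extreme point set equals $\EE$ rather than merely containing it.
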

\begin{proof}
Consider $\mkiv$ to be a subset of $\mathbb{R}^{n_X n_Y n_Z}$ where each element of $\mkiv$ is represented as $\left \{ K\Paren{X=x, Y=y \mid Z=z} \right \}_{x \in \cX, y \in \cY, z \in \cZ}$ and satisfies
\begin{align*}
\forall x \in \cX : K\Paren{X=x,Y=0|Z=0} + K\Paren{X=x,Y=1|Z=1} &\leq 1,\\
\forall x \in \cX : K\Paren{X=x,Y=0|Z=1} + K\Paren{X=x,Y=1|Z=0} &\leq 1,\\
\forall x \in \cX, \forall y \in \cY, \forall z \in \cZ : K\Paren{X=x,Y=y|Z=z} &\geq 0,\\
\forall z \in \cZ : \sum_{x,y}K\Paren{X=x,Y=y|Z=z} &=1.
\end{align*}
  An elementary result in convex geometry states that a point is an extreme point of a polyhedral set (defined by a set of linear (in)equality constraints) in $\mathbb{R}^m$ vector space if and only if it is a feasible point and there exists a subset $A$ of $m$ constraints that are active and linearly independent.
For the case at hand, for a point to be an extreme point of $\mkiv$, it has to satisfy the above $6n+2$ constraints (feasibility) and additionally, a subset of $n_Xn_Yn_Z = 4n$ of them must be active and linearly independent.
The two normalization constraints are equality constraints and hence must be active at any feasible point.

We first show that all points in $\EE$ are extreme points of $\mkiv$.
First, choose $x,x' \in \cX$ with $x \neq x'$ and choose $y,y' \in \cY$.
It can be verified that $\delta_{x,y|0} + \delta_{x',y'|1}$ satisfies the IV inequalities with $2$ out of the $2n$ IV inequalities being active.
Further, $4n-2$ non-negativity constraints and both normalization constraints are active.
For the subset $A$ of active constraints we can take e.g.\ both active IV inequalities together with the $4n_X - 2$ active nonnegativity constraints; one can check that these are linearly independent.
Second, choose $x \in \cX, y \in \cY$. 
It can be verified that $\delta_{x,y|0} + \delta_{x,y|1}$ satisfies the IV inequalities with $2$ out of the $2n$ IV inequalities being active.
Further, $4n-2$ non-negativity constraints and both normalization constraints are active.
For the subset $A$ of active constraints we can take e.g.\ both active IV inequalities together with the $4n_X - 2$ active nonnegativity constraints; one can check that these are linearly independent.

Finally, we check whether $\EE$ exhausts all extreme points.
Pick a feasible point $b \in \mathbb{R}^{n_X n_Y n_Z}$.
We will refer to the indices $i$ with $b_i \neq 0$ ($b_i = 0$) as the ``non-zero (zero) entries of $b$'', and to the indices $j$ in active constraint $a^T b = 1$ with $a_j \neq 0$ as the ``active entries of the constraint''. 
For a set of entries of $b$ and a set of active entries of one or more active constraints, we refer to their intersection as their ``overlap''.
We also call the $b_i$ corresponding to $K (X, Y | Z = z)$ a ``stratum''.

Suppose $b$ is an extreme point of $\mkiv$.
Then at least $4n - 2$ of the $6n$ inequality constraints must be active.
Hence if exactly $r$ IV inequalities are active at $b$, then we need at least $4n - 2 - r$ active nonnegativity constraints at $b$, or in other words, $b$ must have at least $4n - 2 - r$ zero entries.
Because both strata of $b$ need to be normalized, $b$ can have at most $4n - 2$ zero entries.
The number of zero entries in $b$ that can overlap with the active entries of the active IV inequalities is upper bounded by $r$; indeed, otherwise there would be an active IV inequality for which both its active entries are zero entries of $b$, contradicting that this IV inequality is active.

We will show that the only possibilities for $b$ are the extreme points that we already identified, proceeding case by case.
  \begin{enumerate}
    \item $r > 2$.
      This yields a contradiction with the normalization constraints.
      Indeed, each entry of $b$ appears in exactly one IV inequality, and each IV inequality contains exactly two active entries (one for each stratum).
      Hence, the sum of those entries of $b$ that correspond with active entries of active IV constraints must be exactly $r$.
      However, by the normalization constraints, the sum over \emph{all} entries of $b$ must be 2.
      This implies that $r \le 2$.
    \item $r = 2$.
      Then $b$ must contain at least $4n-4$ zero entries, of which at most two can overlap with the active entries of the active IV inequalities.
      \begin{enumerate}
        \item If there are no such overlaps, then all other entries of $b$ must zero entries.
          This means that we need the $4n-4$ nonnegativity constraints corresponding to those other entries in the subset $A$, as well as the two normalization and inequality constraints; no other active constraints exist that could be added to $A$.
          But then the constraints in $A$ would not be linearly independent.
          Indeed, the following linear dependence between the active constraints is obtained: adding the coefficient vectors of the two active IV inequalities and those of all active non-negativity constraints yields the same result as adding the coefficient vectors of the two normalization constraints.
          So we arrive at a contradiction.
        \item 
          If there is at least one overlap, 
          then $b_j=1$ with $j$ the other active entry in the active IV inequality with the overlap.
          This implies $2n-1$ zeroes in the stratum of $j$.
          Consider now the other active IV inequality. 
          Since $b_j=1$, the active entry corresponding to that stratum must be a zero entry of $b$. 
          Hence, $b_k=1$ for $k$ the other active entry in this IV inequality.
          This means that $b$ must be of the form $\delta_{x_1,x_2|0} + \delta_{x_1',x_2'|1}$, with the two non-zero entries corresponding to active entries of two different IV inequalities, and hence we recover the extreme points already identified.
      \end{enumerate}
    \item $r = 1$.
      Then $b$ must contain at least $4n-3$ zero entries, of which at most one can overlap with  the active entries of active IV inequalities.

      Because of the normalization constraints, we need at least one non-zero entry in both strata.
      This means that there must be a stratum $x_3$ with exactly one non-zero entry, and then $b$ must be of the form
      $b = \delta_{x_1,x_2|x_3} + \gamma \delta_{x_1',x_2'|x_3'} + (1-\gamma) \delta_{x_1'',x_2''|x_3'}$ with $\gamma \in (0,1]$ and $x_3' \ne x_3$.
      This means that the active IV inequality must be the one that has active entry $(x_1,x_2|x_3)$.
      Its other active entry is then $(x_1,1-x_2|x_3')$, and this must be an overlapping zero entry of $b$.

      If $\gamma = 1$ then this would also activate the IV inequality that contains active entry $(x_1',x_2'|x_3')$.
      Since the only active IV inequality must also contain active entry $(x_1,x_2|x_3)$, this gives a contradiction, as the two coefficients of $b$ at the active entries sum to 2 instead of 1.
      Hence, $\gamma \in (0,1)$, and $b$ contains exactly $4n-3$ zero entries.

      So the active constraints consist of 1 active IV inequality, 2 active normalization constraints and $4n-3$ active nonnegativity constraints.
      However, these are not linearly independent.
      Indeed: subtracting the coefficient vectors of the $2n-1$ nonnegativity constraints in stratum $x_3$ from the coefficient vector of the normalization constraint of that stratum, and adding the coefficient vector of the nonnegativity constraint corresponding to $(x_1,1-x_2|x_3')$ gives a vector that is identical to the coefficient vector of the active IV inequality.

      So we have arrived at a contradiction.
    \item $r = 0$.
      Then $b$ must contain at least $4n-2$ zero entries.
      Because of the normalization constraints, $b$ needs one non-zero entry in both strata, and its value must be 1.
      This will activate at least one IV inequality. 
      Contradiction. \qedhere
  \end{enumerate}
\end{proof}

\subsection{Nested Fairness Notions: With Confounding}\label{app:nested-cf}
\begin{proposition}\label{prop:cfnotions}
    \begin{equation*}
    \nullgraphrelax = \nullctrfrelax \subset \nullinterrelax, 
    \end{equation*}
     \begin{equation*}
    \nullgraph = \nullctrf \subset \nullinter. 
    \end{equation*}
\end{proposition}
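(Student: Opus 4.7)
The plan is to mirror the proof of the unconfounded case (the preceding Proposition) while accounting for a confounder $U_C$ that may jointly influence $S$ and $A$. Both chains in the statement can be proved by essentially the same argument, since the only effect of the relaxation is to widen the ambient class of SCMs (analogously to $\modelivrelax$ versus $\modeliv$) without changing the internal implications among the three notions. As in the unconfounded case, I would first establish $\nullgraph = \nullctrf$ via two inclusions, then establish $\nullctrf \subseteq \nullinter$ with strictness witnessed by an explicit example.

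The inclusion $\nullgraph \subseteq \nullctrf$ is direct: if $f_A(s,d,u)$ is constant in $s$ for $P$-a.a.\ values of the exogenous variable $u$ (now possibly containing components shared with $S$), then $P_{\model}(f_A(s,d,U) = f_A(S,d,U)) = 1$ for every $s,d$. For the converse $\nullctrf \subseteq \nullgraph$, the decomposition used in the unconfounded proof breaks down, because it relied on the independence of $U_A$ and $S$, which fails under confounding. Instead, writing the exogenous variables as $U = (U_A, U_S, U_C)$ with $S = f_S(U_S, U_C)$, the counterfactual hypothesis furnishes, for every $(s,d)$, a $P$-measure-one event on which $f_A(s,d,u_A,u_C) = f_A(f_S(u_S,u_C), d, u_A, u_C)$. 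Intersecting these countably many events produces a single $P$-measure-one event on which the equality holds for all $(s,d)$ simultaneously; since the right-hand side does not depend on $s$, $f_A(\cdot,d,u_A,u_C)$ is constant in $s$ for every $d$ on this event. Because the event does not involve $u_S$, projecting out $u_S$ shows that $f_A$ is constant in $s$ for $P$-a.a.\ $(u_A,u_C)$, which is exactly $\nullgraph$.

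For $\nullctrf \subseteq \nullinter$, marginalizing the almost sure equality of $f_A(s,d,U)$ and $f_A(S,d,U)$ yields distributional equality of $A$ under $\doop{S=s,D=d}$ across $s$. Strictness is witnessed by Example~\ref{ex:unconfexample}, which embeds into the confounded class as a degenerate case and lies in $\nullinter \setminus \nullctrf$; by the same token it also lies in $\nullinterrelax \setminus \nullctrfrelax$, so a single witness serves both chains.

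The most delicate step is the converse $\nullctrf \subseteq \nullgraph$: one must ensure that the conclusion is about the \emph{marginal} distribution of $(U_A,U_C)$, rather than some conditional distribution tied to the value of $S$. The key enabling facts are (i) the countability of the index set over which $(s,d)$ varies, so that a countable intersection of measure-one events remains measure-one, and (ii) the $u_S$-independence of the resulting event, which permits marginalizing $u_S$ out without losing the measure-one conclusion.
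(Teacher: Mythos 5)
Your proof is correct and follows the paper's overall skeleton (two inclusions for the equality, marginalization of the a.s.\ counterfactual identity for $\nullctrfrelax \subseteq \nullinterrelax$, and Example~\ref{ex:unconfexample} as the witness of strictness for both chains), but your argument for the key converse $\nullctrfrelax \subseteq \nullgraphrelax$ takes a genuinely different route. The paper decomposes $P_{\model}\Paren{f_A(s,d,U_A,U)=f_A(S,d,U_A,U)}$ by total probability over the value of $S$, using that $S$ is independent of $(U_A,U)$, and then handles the degenerate case $P_{\model}(S=s')=0$ separately by invoking the counterfactual condition at $s'$. You instead intersect the finitely many measure-one events indexed by $(s,d)$ and observe that the common right-hand side $f_A(S,d,U_A,U)$ does not depend on $s$, so $f_A(\cdot,d,u_A,u)$ is a.s.\ constant in $s$; projecting out $u_S$ then gives the marginal statement. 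Your route is cleaner in that it avoids the positivity case split entirely, and it is more robust in that it would survive even if $S$ were confounded with $A$. One factual correction: your premise that the unconfounded decomposition ``breaks down because the independence of $U_A$ and $S$ fails under confounding'' misreads the model class. In $\modelsedgerelax$ the latent confounder $U$ sits between $D$ and $A$ only; $S=f_S(U_S)$ remains independent of $(U_A,U)$, so the paper reuses the unconfounded decomposition verbatim with $U_A$ replaced by $(U_A,U)$ (confounding involving $S$ is treated separately, by reduction, in Appendix~\ref{app:SDconf}). This does not create a gap in your proof --- you solved a harder problem than necessary --- but it explains why the paper did not need your workaround.
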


\begin{proof}
If $\model \in \modelsedgerelax$, then 
\begin{equation}
    A^{\doop{S=s,D=d}} = f_{\outcome}\Paren{s,d,U_A,U}, \hspace{5mm}A^{\doop{D=d}} = f_{\outcome}\Paren{S,d,U_A,U}.
\end{equation}

For $\model \in \nullgraphrelax$, since $\sex$ is not a parent of $\outcome$, for all $\ldept, f_{A}(s,d,U_A,U)$ is constant in $s$ $P$-a.s. This implies that for all $d,s,s'$, 
\begin{equation*}
P_{\model}\Paren{f_{\outcome}\Paren{s,d,U_A,U} = f_{\outcome}\Paren{s',d,U_A,U}} = 1
\end{equation*}
Therefore, for all $s,d$,
\begin{equation}\label{eq:ctrf_causalmechanism}
P_{\model}\Paren{f_{\outcome}\Paren{s,d,U_A,U} = f_{\outcome}\Paren{S,d,U_A,U}} = 1,
\end{equation}
implying that $\model \in \nullctrfrelax$. For the converse, $\model \in \nullctrfrelax$ implies \eqref{eq:ctrf_causalmechanism}. For $s\neq s'$, and all $d$,
\begin{align*}
&P_{\model}\Paren{f_A(s,d,U_A,U)=f_A(S,d,U_A,U)}\\
&=P_{\model}\Paren{f_A(s,d,U_A,U)=f_A(s',d,U_A,U)}P_{\model}\Paren{S=s'} +P_{\model}\Paren{S=s}.
\end{align*}
If $P_{\model}\Paren{S=s'} > 0$, we conclude $P_{\model}\Paren{f_A(s,d,U_A,U)=f_A(s',d,U_A,U)} = 1$. If $P_{\model}\Paren{S=s'} = 0$, since \eqref{eq:ctrf_causalmechanism} holds for $s'$, we have $P_{\model}\Paren{f_A(s,d,U_A,U)=f_A(s',d,U_A,U)} = 1$. Therefore, $\model \in \nullgraphrelax$. Therefore, $\nullgraphrelax = \nullctrfrelax$. Further, $\nullgraph = \nullctrf$.

We now prove that $\nullctrfrelax \subseteq \nullinterrelax$.
For $\model \in \nullctrfrelax$, \eqref{eq:ctrf_causalmechanism} holds. Therefore, for all $\lsex,\ldept$,
\begin{equation}
P_{\model}\Paren{f_{\outcome}\Paren{s,d,U_A,U}} = P_{\model}\Paren{f_{\outcome}\Paren{S,d,U_A,U}}.
\end{equation}   
Therefore, $\nullctrfrelax  \subseteq \nullinterrelax$ and subsequently $\nullctrf  \subseteq \nullinter$. Note that the Example~\ref{ex:unconfexample} lies in $\modelsedgerelax$ (and in $\modelsedge$ since $P_{\model}(S=s)>0$ for all $s$) for any $U$ that is independent of $U_S,U_D,U_A$. Further, $P_{\model}\Paren{A=1 \mid \doop{S=s},\doop{D=d}} = 0.5 = P_{\model}\Paren{A=1 \mid \doop{D=d}}$; however, $S$ is a parent of $A$. Therefore, $\nullctrfrelax \subset \nullinterrelax$ and $\nullctrf \subset \nullinter$.



\end{proof}

\subsection{Equivalence of Statistical Tests}\label{app:equiv-cf}

 \begin{figure}[th]
     \centering
            \begin{tikzpicture}
            \tikzstyle{vertex}=[circle,fill=none,draw=black,minimum size=17pt,inner sep=0pt]
\node[vertex] (Z) at (0,0) {$S$};
\node[vertex][fill=lightgray] (U) at (0,1) {$U_S$};
\node[vertex] (Y) at (2.5,0) {$A$};
\node[vertex] (X) at (1.3,1) {$D$};
\node[vertex][fill=lightgray] (R) at (2.4,0.95) {$R$};
\path (Z) edge (X);
\path (X) edge (Y);
\path (U) edge (Z);
\path (R) edge (X);
\path (R) edge (Y);
\path (Z) edge (Y);

            \end{tikzpicture}
        \caption{Response-function parameterization of $M \in \modelsedgerelax$} 
        \label{fig:proof_conf}
        \end{figure}

\confequiv*

\begin{proof}Like in Section~\ref{app:ivsharp}, we prove a more general statement, Lemma~\ref{lem:mk_equivalence}, that includes Theorem~\ref{thm:equivalence} as a special case. 
We define analogues of $\modelsedge, \nullnotion$ that remove the positivity assumption, $P_{\model}(S=s)>0$ for all $s$, as $\modelsedgerelax, \nullnotionrelax$, respectively (where we use `notion' as a placeholder for `graph', `ctrf' and `inter').
\begin{lemma}\label{lem:mk_equivalence}Let 
\begin{align*}
\mkgraph &\triangleq \left \lbrace P_{\model}\Paren{\dept,\outcome \mid \doop{\sex}} : \model \in \nullgraphrelax \right \rbrace, \\
\mkinter &\triangleq \left \lbrace P_{\model}\Paren{\dept,\outcome \mid \doop{\sex}} : \model \in \nullinterrelax \right \rbrace, \\
\mkctrf &\triangleq \left \lbrace P_{\model}\Paren{\dept,\outcome \mid \doop{\sex}} : \model \in \nullctrfrelax \right \rbrace.
\end{align*}
Then $\mkinter = \mkctrf = \mkgraph = \mkiv,$ where $\mkiv$ is defined in Lemma~\ref{lem:mk_iv_tight}.
\end{lemma}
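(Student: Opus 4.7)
My plan is to prove the chain of equalities by combining Proposition~\ref{prop:cfnotions} with the sharpness result Lemma~\ref{lem:mk_iv_tight}, inserting $\mkiv$ as a bridge via two inclusions: $\mkinter \subseteq \mkiv$ and $\mkiv \subseteq \mkgraph$. The first and third equalities will then follow by sandwiching.

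The inclusions $\mkgraph = \mkctrf \subseteq \mkinter$ are immediate: Proposition~\ref{prop:cfnotions} gives $\nullgraphrelax = \nullctrfrelax \subseteq \nullinterrelax$ at the SCM level, and applying the map $\model \mapsto P_{\model}(\dept,\outcome \mid \doop{\sex})$ to both sides transports the containments to kernel space. So the remaining work is to establish $\mkinter \subseteq \mkiv \subseteq \mkgraph$.

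For $\mkinter \subseteq \mkiv$, I would imitate the bounding argument in the proof of Lemma~\ref{lem:iv_mk}, but with $\sex$ playing the role of the instrument $Z$, $\dept$ the treatment $X$, and $\outcome$ the outcome $Y$. Writing out the structural equations gives
\begin{equation*}
P_{\model}(D=d, A=a \mid \doop{S=s}) \leq P_{\model}(A=a \mid \doop{S=s}, \doop{D=d}),
\end{equation*}
and for $\model \in \nullinterrelax$ the right-hand side is constant in $s$ and equals $P_{\model}(A=a \mid \doop{D=d})$. Summing over $a$ and maximizing over $d$ then yields $\max_d \sum_a \max_s P_{\model}(D=d,A=a \mid \doop{S=s}) \le 1$, which is exactly the IV inequality \eqref{eq:iv_MK} after relabeling.

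For $\mkiv \subseteq \mkgraph$, I would pull back using Lemma~\ref{lem:mk_iv_tight}: for any kernel $K \in \mkiv$ there is an $\model' \in \modelivrelax$ with $K = P_{\model'}(X,Y \mid \doop{Z})$. Renaming $Z \mapsto \sex$, $X \mapsto \dept$, $Y \mapsto \outcome$ produces an SCM $\model$ whose graph $\sex \to \dept \to \outcome$ with a hidden confounder between $\dept$ and $\outcome$ is a subgraph of the graph defining $\modelsedgerelax$ (the edge $\sex \to \outcome$ is simply absent); consequently $\model \in \modelsedgerelax$, and since $\sex$ is not a parent of $\outcome$, in fact $\model \in \nullgraphrelax$. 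By construction $P_{\model}(\dept,\outcome \mid \doop{\sex}) = K$, giving the desired inclusion. The main conceptual point to verify carefully is the graph-subset check in this last step; everything else is a direct chase through the previously established lemmas.
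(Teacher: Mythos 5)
Your proposal is correct, and its overall architecture (sandwiching via $\mkiv$ using Proposition~\ref{prop:cfnotions} for $\mkgraph = \mkctrf \subseteq \mkinter$, and the sharpness of Lemma~\ref{lem:mk_iv_tight} for $\mkiv \subseteq \mkgraph$) matches the paper. Where you genuinely diverge is the inclusion $\mkinter \subseteq \mkiv$. The paper proves this by passing to the response-function parameterization, writing the interventional-fairness constraint as a linear condition on $\tilde{P}(R)$, and then manipulating the resulting sums term by term to extract the IV inequalities. You instead generalize the one-line bound from Lemma~\ref{lem:iv_mk} directly: $P_{\model}(D=d,A=a \mid \doop{S=s}) \le P_{\model}(A=a \mid \doop{S=s},\doop{D=d})$ holds because the left-hand event is contained in the right-hand one (on $\{f_D(s,U_D,U)=d\}$ the composed mechanism reduces to $f_A(s,d,U_A,U)$, and the bound survives the shared confounder $U$ since you only drop an intersection), and interventional fairness is precisely the statement that the upper bound is constant in $s$ and equal to $P_{\model}(A=a\mid\doop{D=d})$, which normalizes over $a$. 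This is shorter and more transparent than the paper's computation, and it makes clear that the only role of interventional fairness is to remove the $s$-dependence of the bounding kernel; what the paper's route buys in exchange is an explicit polyhedral description of the interventionally fair models in response-function coordinates, which is not needed for the lemma itself. Your final step is also sound: the graph of an IV model, relabeled via $Z\mapsto S$, $X\mapsto D$, $Y\mapsto A$, is a subgraph of the graph defining the confounded Berkeley class with the $S\to A$ edge absent, so the pulled-back model lies in $\nullgraphrelax$ and realizes the given kernel.
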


Note that $\distiv = \left \lbrace P(Z): \forall z, P(Z=z)>0  \right \rbrace \otimes \mkiv$ since assuming positivity, \eqref{eq:iv} is identical to \eqref{eq:iv_MK}. Further, $\distnotion = \left\{ P_{\model}(\sex):\model \in \nullnotion  \right\} \otimes \mknotion$ since assuming positivity, $\modelsedge = \modelsedgerelax$ and $P_{\model}\Paren{\dept,\outcome \mid \doop{\sex}} = P_{\model}\Paren{\dept,\outcome \mid \sex}$ for $\model \in \modelsedge$. Since the first factors are identical, Theorem~\ref{thm:equivalence} follows from Lemma~\ref{lem:mk_equivalence}.
\end{proof}


\begin{proof}[Proof of Lemma~\ref{lem:mk_equivalence}]


For $\model \in \modelsedgerelax$, the response-function parameterization yields a counterfactually equivalent SCM, $\tilde{\model}$ represented by the tuple $(\enop,\tilde{\exrv},\tilde{\spc},\tilde{f},\tilde{P})$, where $\enop = \left \lbrace \sex, \dept, \outcome \right \rbrace, \tilde{\exrv} = \left \lbrace \response, U_{\sex} \right \rbrace, \tilde{\spc} =\spc_{\enop}\times\spc_{\tilde{\exrv}}, \tilde{f} = \Paren{\tilde{f}_{\sex}, \tilde{f}_{\dept}, \tilde{f}_{\outcome}}$ where we define $\spc_{\response}, \tilde{f},\tilde{P}$ through the function $\Phi: \spc_{\exrv} \mapsto \spc_{\tilde{\exrv}}$ where
\begin{align*}
    \spc_{\response} &\triangleq \spc_{\dept}^{\spc_{\sex}} \times \spc_{\outcome}^{\spc_{\sex}\times \spc_{\dept}},\\
    \forall u_S,u_D,u_A,u, \Phi\Paren{u_S,u_D,u_A,u} &\triangleq \Paren{\Paren{s \mapsto f_D(s,u,u_D),(s,d) \mapsto f_A(s,d,u,u_A)},u_S},\\
    \forall u_{\sex}, \tilde{f}_{\sex}\Paren{u_S}&\triangleq f_{\sex}(u_{\sex}),\\
\forall \lsex, \tilde{f}_{\dept}\Paren{\respfunc,\lsex} &\triangleq \respfunc_1\Paren{\lsex}, \\
\forall \lsex, \ldept, \tilde{f}_{\outcome}\Paren{\respfunc,\lsex,\ldept} &\triangleq \respfunc_2\Paren{\lsex,\ldept},
\end{align*}
where $\respfunc = \Paren{\respfunc_1,\respfunc_2}$ and $\tilde{P}$ is the push-forward distribution $\Phi_{*}(P)$.
Note that $\spc_{\response}$ is a discrete space, $\response$ a discrete random variable, and $\tilde{P}(\response)$  a discrete distribution over $\spc_{\response}$. Under the response-function parameterization, only $\tilde{P}(\response)$ is a parameter and we will abuse notation and denote it as $\tilde{P}$ henceforth. Therefore, we can represent $\nullgraphrelax$ in the parameter space as 


\begin{equation}\label{eq:respfunc_graph_edge}
    \nullgraphresp 
    \triangleq \left \lbrace \tilde{P} \in \triangle\Paren{\cX_{\response}} : \tilde{P}\Paren{\respfunc_1,\respfunc_2} \neq 0 \text{ implies } \forall \ldept, \respfunc_2(0,\ldept) = \respfunc_2\Paren{1,\ldept} \right \rbrace.
\end{equation}

To express $\nullinterrelax$, we express the interventional Markov kernels $P_{\tilde{\model}}\Paren{\outcome\mid \doop{\sex}, \doop{\dept}}$ in terms of $\tilde{P}$. Since counterfactual equivalence implies interventional equivalence, for all $\lsex, \ldept$, $P_{\model}\Paren{\outcome=1\mid \doop{\sex=\lsex}, \doop{\dept=\ldept}} = P_{\tilde{\model}}\Paren{\outcome=1\mid \doop{\sex=\lsex}, \doop{\dept=\ldept}}$, where 
\begin{align}
    P_{\tilde{\model}}\Paren{\outcome=1\mid \doop{\sex=\lsex}, \doop{\dept=\ldept}} &= \sum\limits_{\Paren{\respfunc_1,\respfunc_2}  \in \cX_{\response}}\bm{1}\Brack{\respfunc_2\Paren{\lsex,\ldept}=1}\tilde{P}\Paren{\respfunc_1,\respfunc_2}, \label{eq:inter_resp}\\
    P_{\tilde{\model}}\Paren{\outcome=1\mid \doop{\dept=\ldept}} &= \sum\limits_{\lsex^*}\sum\limits_{\Paren{\respfunc_1,\respfunc_2}  \in \cX_{\response}}\bm{1}\Brack{\respfunc_2\Paren{\lsex^*,\ldept}=1}\tilde{P}\Paren{\respfunc_1,\respfunc_2}P_{\tilde{\model}}\Paren{\lsex^*} \label{eq:inter_resp_doD},
\end{align}
Subtracting \eqref{eq:inter_resp} from  \eqref{eq:inter_resp_doD} we get 
\begin{align}
    &P_{\tilde{\model}}\Paren{\outcome=1\mid \doop{\sex=\lsex}, \doop{\dept=\ldept}} - P_{\tilde{\model}}\Paren{\outcome=1\mid \doop{\dept=\ldept}} \nonumber \\
    & =\Paren{\sum\limits_{\Paren{\respfunc_1,\respfunc_2}  \in \cX_{\response}} \Paren{\bm{1}\Brack{\respfunc_2\Paren{0,\ldept}=1} - \bm{1}\Brack{\respfunc_2\Paren{1,\ldept}=1}}\tilde{P}\Paren{\respfunc_1,\respfunc_2}}P_{\tilde{\model}}\Paren{s'} = 0 \label{eq:inter_resp_s}
\end{align}
for $\model \in \nullinterrelax$, where $s' \neq s$. Similarly, 
\begin{align}
    &P_{\tilde{\model}}\Paren{\outcome=1\mid \doop{\sex=\lsex'}, \doop{\dept=\ldept}} - P_{\tilde{\model}}\Paren{\outcome=1\mid \doop{\dept=\ldept}} \nonumber\\
    & =\Paren{\sum\limits_{\Paren{\respfunc_1,\respfunc_2}  \in \cX_{\response}} \Paren{\bm{1}\Brack{\respfunc_2\Paren{0,\ldept}=1} - \bm{1}\Brack{\respfunc_2\Paren{1,\ldept}=1}}\tilde{P}\Paren{\respfunc_1,\respfunc_2}}P_{\tilde{\model}}\Paren{s} = 0 \label{eq:inter_resp_sp}.
\end{align}
Since both \eqref{eq:inter_resp_s} and \eqref{eq:inter_resp_sp} hold, 
 the response-function parameterized analogue of $\nullinterrelax$ is 
\begin{equation}\label{eq:respfun_inter_edge}
    \nullinterresp \triangleq \left \lbrace \tilde{P} \in \triangle\Paren{\cX_{\response}} : \forall \ldept, \sum\limits_{\Paren{\respfunc_1,\respfunc_2}  \in \cX_{\response}} \Paren{\bm{1}\Brack{\respfunc_2\Paren{0,\ldept}=1} - \bm{1}\Brack{\respfunc_2\Paren{1,\ldept}=1}}\tilde{P}\Paren{\respfunc_1,\respfunc_2} = 0 \right \rbrace. 
\end{equation}

Note that both $\nullgraphresp$ and $\nullinterresp$ are polyhedra in $\triangle\Paren{\cX_{\response}}$. Further, $\nullgraphresp \subseteq \nullinterresp$. While, $\nullgraphresp, \nullinterresp$ are collections of distributions, we will also refer to them as collection of response-function-parameterized SCMs. 

From interventional equivalence (which follows as a result of counterfactual equivalence) of the response-function-parameterization, we have 
\begin{align*}
    \mkgraph &= \left \lbrace P_{\tilde{\model}}\Paren{\dept,\outcome\mid \doop{\sex}} : \tilde{\model} \in \nullgraphresp \right \rbrace \\
    \mkinter &= \left \lbrace P_{\tilde{\model}}\Paren{\dept,\outcome\mid \doop{\sex}} : \tilde{\model} \in \nullinterresp \right \rbrace.
\end{align*}


We now show that $\mkinter = \mkgraph = \mkiv$. First, notice that $\mkinter \supseteq \mkgraph$ since $\nullinterresp \supseteq \nullgraphresp$. We first show that $\mkinter \subseteq \mkiv$ and then $\mkgraph = \mkiv$ which concludes the argument. 

\bm{$\mkinter \subseteq \mkiv$}: 
The solution function of the response-function parameterized SCM, $g_{A,D}: \cX_{\sex} \times \cX_{\response} \mapsto \cX_{\outcome} \times \cX_{\dept}$ induces a mapping from $\triangle\Paren{\cX_{\response}}$ which can be considered as a subset of $\RR^{\# \cX_{\response}}$ to the set of Markov kernels $P_{\tilde{\model}}\Paren{\dept,\outcome \mid \doop{\sex}}$ which  can be considered to be a subset of $\RR^{\#\Paren{\cX_{\outcome}}\times \#\Paren{\cX_{\dept}}\times \#\Paren{\cX_{\sex}}}$.
The condition in \eqref{eq:respfun_inter_edge} implies that for all $\ldept$,
\begin{equation}\label{eq:constraint_outcome_one}
    \sum\limits_{\respfunc: \respfunc_2\Paren{0,\ldept}=1} \tilde{P}(\respfunc) = \sum\limits_{\respfunc: \respfunc_2\Paren{1,\ldept}=1} \tilde{P}(\respfunc). 
\end{equation}
Since, $\sum\limits_{\respfunc} \tilde{P}\Paren{\respfunc} = 1$, 
\begin{equation}\label{eq:constraint_outcome_zero}
    \sum\limits_{\respfunc: \respfunc_2\Paren{0,\ldept}=0} \tilde{P}(\respfunc) = \sum\limits_{\respfunc: \respfunc_2\Paren{1,\ldept}=0} \tilde{P}(\respfunc). 
\end{equation}
Denote $P_{\tilde{\model}}\Paren{\dept = \ldept, \outcome = \loutcome \mid \doop{\sex = \lsex}} $ by $P_{\tilde{\model}}\Paren{d,a || s}$. For  $P_{\tilde{\model}}\Paren{d,a || s} \in \mkinter$,
\begin{equation*}
    P_{\tilde{\model}}\Paren{d,a || s} = \sum\limits_{\respfunc: \respfunc_1\Paren{\lsex}=\ldept, \respfunc_2\Paren{\lsex,\ldept} = \loutcome
    } \tilde{P}(\respfunc). 
\end{equation*}
Therefore, from \eqref{eq:constraint_outcome_one}, 
\begin{align}
    \sum\limits_{\respfunc: \respfunc_2\Paren{0,\ldept}=1} \tilde{P}(\respfunc) &= P_{\tilde{\model}}(1,\ldept || 0) + \sum\limits_{\respfunc: \respfunc_1(0) \neq \ldept, \respfunc_2\Paren{0,\ldept}=1} \tilde{P}(\respfunc) \label{eq:1d0}\\
    &= \sum\limits_{\respfunc: \respfunc_2\Paren{1,\ldept}=1} \tilde{P}(\respfunc) \nonumber \\
    &= P_{\tilde{\model}}(1,\ldept || 1) + \sum\limits_{\respfunc: \respfunc_1(1) \neq \ldept, \respfunc_2\Paren{1,\ldept}=1} \tilde{P}(\respfunc) \label{eq:1d1}.
\end{align}
From \eqref{eq:constraint_outcome_zero}, 
\begin{align}
    \sum\limits_{\respfunc: \respfunc_2\Paren{0,\ldept}=0} \tilde{P}(\respfunc) &= P_{\tilde{\model}}(0,\ldept || 0) + \sum\limits_{\respfunc: \respfunc_1(0) \neq \ldept, \respfunc_2\Paren{0,\ldept}=0} \tilde{P}(\respfunc) \label{eq:0d0} \\
    &= \sum\limits_{\respfunc: \respfunc_2\Paren{1,\ldept}=0} \tilde{P}(\respfunc) \nonumber \\
    &= P_{\tilde{\model}}(0,\ldept || 1) + \sum\limits_{\respfunc: \respfunc_1(1) \neq \ldept, \respfunc_2\Paren{1,\ldept}=0} \tilde{P}(\respfunc) \label{eq:0d1}.
\end{align}
Since from \eqref{eq:constraint_outcome_one},
\begin{equation*}
    \sum\limits_{\respfunc} \tilde{P}\Paren{\respfunc} = \sum\limits_{\respfunc: \respfunc_2\Paren{0,\ldept}=0} \tilde{P}(\respfunc) + \sum\limits_{\respfunc: \respfunc_2\Paren{0,\ldept}=1} \tilde{P}(\respfunc) = \sum\limits_{\respfunc: \respfunc_2\Paren{0,\ldept}=0} \tilde{P}(\respfunc) + \sum\limits_{\respfunc: \respfunc_2\Paren{1,\ldept}=1} \tilde{P}(\respfunc) = 1.
\end{equation*}
Substituting from \eqref{eq:0d0} and \eqref{eq:1d1}, 
\begin{equation*}
    P_{\tilde{\model}}(0,\ldept || 0) + \sum\limits_{\respfunc: \respfunc_1(0) \neq \ldept, \respfunc_2\Paren{0,\ldept}=0} \tilde{P}(\respfunc) + P_{\tilde{\model}}(1,\ldept || 1) + \sum\limits_{\respfunc: \respfunc_1(1) \neq \ldept, \respfunc_2\Paren{1,\ldept}=1} \tilde{P}(\respfunc) =1. 
\end{equation*}
Similarly, substituting from \eqref{eq:0d1} and \eqref{eq:1d0}, 
\begin{equation*}
    P_{\tilde{\model}}(0,\ldept || 1) + \sum\limits_{\respfunc: \respfunc_1(1) \neq \ldept, \respfunc_2\Paren{1,\ldept}=0} \tilde{P}(\respfunc)+ P_{\tilde{\model}}(1,\ldept || 0) + \sum\limits_{\respfunc: \respfunc_1(0) \neq \ldept, \respfunc_2\Paren{0,\ldept}=1} \tilde{P}(\respfunc)=1. 
\end{equation*}
 This implies $P_{\tilde{\model}}(0,\ldept || 0) + P_{\tilde{\model}}(1,\ldept || 1) \leq 1, P_{\tilde{\model}}(0,\ldept || 1) + P_{\tilde{\model}}(1,\ldept || 0) \leq 1$. These are precisely the IV inequalities and they are satisfied. Therefore, $\mkinter \subseteq \mkiv$.

 \bm{$\mkgraph = \mkiv$} follows from Theorem~\ref{thm:iv_tight} since $\model \in \nullgraphrelax$ implies $\model \in \modelivrelax$. By Proposition~\ref{prop:cfnotions}, the lemma follows. 

\end{proof}

\subsection{Relaxing the assumption of no confounding between $S$ and $D$}\label{app:SDconf}

In this section, we prove that Theorem~\ref{thm:iv_tight} and Theorem~\ref{thm:equivalence} hold when $\modeliv$ and $\modelsedge$ are expanded by allowing for confounding between $\sex$ and $\dept$. Denote the corresponding expanded models by $\modelivZX$ and $\modelsedgeSD$, respectively, where the former has structural equations of the form $Z = f_Z(U_Z,U_{ZX}), X=f_{X}(Z,U_X,U_{ZX},U_{XY}),Y= f_Y(X,U_Y,U_{XY})$ where $U_Z,U_X,U_Y,U_{ZX},U_{XY}$ are independent exogenous random variables, and the latter has structural equations of the form $S = f_S(U_S,U_{SD}), D=f_{D}(S,U_D,U_{SD},U_{DA}),A= f_A(S,D,U_A,U_{DA})$ where $U_S,U_D,U_A,U_{SD},U_{DA}$ are independent exogenous random variables. The corresponding expanded causal null hypothesis corresponding to the fairness notions are denoted by $H^0_{\text{cf-notion}+SD}$ where we use `notion' as a placeholder for `graph',`ctrf', `inter'. 

For $\model \in \modelivZX$, pick $U \sim \text{Unif}\left[0,1\right]$ and a deterministic map $h: \mathcal{Z} \times \left[0,1\right] \mapsto \mathcal{U}_{ZX}$ such that $U_{ZX} = h(Z,U)$ a.s.. Note that such an $h$ always exists for any random variable $U_{ZX}$ taking values in a standard measurable space (see e.g.\ \cite[Corollary 2.7.7]{ForreMooij25}). Define $\tilde{\model} = (V=\{\tilde{Z},\tilde{X},\tilde{Y} \}, W = \{ U_{\tilde{Z}},U_{\tilde{X}},U,U_{\tilde{Y}},U_{\tilde{XY}}\}, \cX = \cX_{V} \times \cX_W, \tilde{f} = (f_{\tilde{Z}},f_{\tilde{X}},f_{\tilde{Y}}),\tilde{P})$ where a) $\forall u_{\tilde{Z}}, f_{\tilde{Z}}(u_{\tilde{Z}}) \triangleq u_{\tilde{Z}}$, b) $\forall u_{\tilde{X}}, u, u_{\tilde{XY}}, \tilde{z}, f_{\tilde{X}}(u_{\tilde{X}},u,u_{\tilde{XY}},\tilde{z}) \triangleq f_{X}(\tilde{z},h(\tilde{z},u),u_{\tilde{X}},u_{\tilde{XY}})$, c) $\forall u_{\tilde{Y}},u_{\tilde{XY}}, \tilde{x}, f_{\tilde{Y}}(u_{\tilde{Y}},u_{\tilde{XY}}, \tilde{x}) \triangleq f_{Y}(u_{\tilde{Y}},u_{\tilde{XY}}, \tilde{x})$, and $\tilde{P} = P_{\model}(Z) \otimes P_X \otimes \text{Unif}\left[0,1\right] \otimes P_{Y} \otimes P_{XY}$ where $P_X, P_Y, P_{XY}, P_Z, P_{ZX}$ are marginals of $P$ over $U_X, U_Y, U_{XY}, U_Z, U_{ZX}$ respectively.  Note that $P_{\model}(Z)$ is the pushforward of $P_Z \otimes P_{ZX}$ through $f_Z$ thus making $\tilde{P}$ a product distribution over the exogenous random variables. By the above construction, $\tilde{M} \in \modeliv$. For every $ \model \in \modelivZX$, $P_{\model}(X,Y,Z) = P_{\model}(Z) \times P_{\model}(X,Y \mid Z) = P_{\tilde{\model}}(\tilde{Z}) \times P_{\tilde{\model}}(\tilde{X},\tilde{Y} \mid \doop{\tilde{Z}}) \in \distmodeliv$. Therefore, $\{ P_{M}(Z,X,Y): M \in \modelivZX \} = \distmodeliv$. Using a similar argument that replaces the labels $Z,X,Y$ by $S,D,A$, $\{ P_{M}(S,D,A): M \in H^{0}_{\text{cf-notion}+SD}\} = \distnotion$. This implies Theorem~\ref{thm:iv_tight} and Theorem~\ref{thm:equivalence} hold for $\modelivZX$ and $\modelsedgeSD$ respectively.

\section{Comparison With Existing Notions}
\subsection{Without Confounding}
\subsubsection{Counterfactual Fairness and Demographic Parity}\label{app:kusnerctrfdemo}

We restate the counterfactual notion of fairness from \citet{KusnerLRS17} for the Berkeley example below.
\begin{definition}[Counterfactual Fairness \citep{KusnerLRS17}]\label{def:ctrfkusner}
$\model \in \modelsunconfedge$ is fair if for all $\lsex,\ldept$, $P_{\model}\Paren{s,d}>0$ implies
\begin{equation*}
P_{\model}\Paren{\outcome^{\doop{\sex=\lsex'}}\mid \dept = \ldept, \sex= \lsex} = P_{\model}\Paren{\outcome^{\doop{\sex=\lsex}}\mid \dept = \ldept, \sex= \lsex}
\end{equation*}
for $s' \neq s$.
\end{definition}
The counterfactual fairness notion of \cite{KusnerLRS17} implies demographic parity for the Berkeley example without allowing for confounding.
\begin{proposition}\label{prop:ctrf-demo-nocf}
        If $\model \in \modelsunconfedge$ is counterfactually fair according to Definition~\ref{def:ctrfkusner}, then $P_{\model}$ satisfies demographic parity, i.e., for all $s,s'$ such that $P_{\model}(s), P_{\model}(s') >0$, $$P_{\model}\Paren{A=1|S=s} = P_{\model}\Paren{A=1|S=s'}.$$
\end{proposition}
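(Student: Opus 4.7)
The plan is to reduce Kusner's counterfactual fairness condition to an equality involving only observational conditionals and an interventional marginal, then to compute $P_{\model}(A^{\doop{S=s'}}=1)$ by conditioning on $(S,D)$ and invoke the counterfactual-fairness condition term-by-term. Throughout, I use that $\modelsunconfedge$ has no unobserved confounding, so the exogenous noise $U_A$ driving $A$ is independent of $(U_S, U_D)$ and hence of $(S,D)$.

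First I would rewrite Kusner's condition in a more convenient form. Since $\model \in \modelsunconfedge$ and $A = f_A(S,D,U_A)$, on the event $\{S=s\}$ we have $A = f_A(s,D,U_A) = A^{\doop{S=s}}$ $P$-a.s., so
\begin{equation*}
P_{\model}\Paren{A^{\doop{S=s}}=1 \mid D=d, S=s} = P_{\model}\Paren{A=1 \mid D=d, S=s}
\end{equation*}
whenever $P_{\model}(s,d)>0$. Combined with Definition~\ref{def:ctrfkusner}, counterfactual fairness is equivalent to
\begin{equation*}
P_{\model}\Paren{A^{\doop{S=s'}}=1 \mid D=d, S=s} = P_{\model}\Paren{A=1 \mid D=d, S=s} \qquad \forall\, s,s',d : P_{\model}(s,d)>0.
\end{equation*}

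Next I would compute the interventional marginal by conditioning on $(S,D)$:
\begin{equation*}
P_{\model}\Paren{A^{\doop{S=s'}}=1} = \sum_{s,d:\, P_{\model}(s,d)>0} P_{\model}\Paren{A^{\doop{S=s'}}=1 \mid D=d, S=s}\, P_{\model}(D=d,S=s).
\end{equation*}
Applying the reformulated fairness condition to every summand (for $s=s'$ via the substitution identity, for $s\neq s'$ via Kusner's condition) yields
\begin{equation*}
P_{\model}\Paren{A^{\doop{S=s'}}=1} = \sum_{s,d} P_{\model}(A=1 \mid D=d,S=s)P_{\model}(D=d,S=s) = P_{\model}(A=1),
\end{equation*}
which is independent of $s'$.

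Finally I would identify $P_{\model}(A^{\doop{S=s'}}=1)$ with $P_{\model}(A=1 \mid S=s')$ in the unconfounded setting: since $U_A \perp (U_S,U_D)$, the g-formula / back-door adjustment along the empty adjustment set gives
\begin{equation*}
P_{\model}\Paren{A^{\doop{S=s'}}=1} = \sum_d P_{\model}(A=1 \mid D=d,S=s')P_{\model}(D=d \mid S=s') = P_{\model}(A=1 \mid S=s'),
\end{equation*}
valid for any $s'$ with $P_{\model}(s')>0$. Combining with the previous display gives $P_{\model}(A=1\mid S=s')=P_{\model}(A=1)$ for every such $s'$, which is demographic parity. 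The only subtle step is the very first rewriting: one must check that conditioning on $\{S=s\}$ does not alter the distribution of $U_A$, which is where the absence of confounding in $\modelsunconfedge$ is essential; all other steps are bookkeeping with the law of total probability.
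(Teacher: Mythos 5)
Your proof is correct, and it takes a genuinely different route through the middle of the argument. Both your proof and the paper's start from the same consistency observation — that the right-hand side of Definition~\ref{def:ctrfkusner} equals $P_{\model}(A=1\mid D=d,S=s)$ — and both end by invoking the absence of confounding (Rule 2 of do-calculus) to identify an interventional quantity with an observational one. The difference is the decomposition in between. The paper fixes the factual value $s$, marginalizes the fairness condition over $D$ to obtain $P_{\model}(A^{\doop{S=s'}}=1\mid S=s)=P_{\model}(A=1\mid S=s)$, and then needs a twin-network argument to show that the counterfactual $A^{\doop{S=s'}}$ is independent of the factual $S$, so that the left-hand side equals $P_{\model}(A=1\mid \doop{S=s'})=P_{\model}(A=1\mid S=s')$. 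You instead average the fairness condition over the full $(S,D)$ partition, which collapses $P_{\model}(A^{\doop{S=s'}}=1)$ to the observational marginal $P_{\model}(A=1)$ and yields $P_{\model}(A=1\mid S=s')=P_{\model}(A=1)$ for every $s'$ in the support. Your version avoids the twin network entirely — the only causal inputs are consistency and the identification $P_{\model}(A\mid\doop{S=s'})=P_{\model}(A\mid S=s')$ — at the cost of passing through the unconditional marginal rather than comparing two values of $s$ directly; it also delivers the marginally stronger statement that every stratum of $S$ matches the population rate. One small correction to your closing remark: the consistency rewriting $P_{\model}(A^{\doop{S=s}}=1\mid D=d,S=s)=P_{\model}(A=1\mid D=d,S=s)$ is a pointwise almost-sure identity on the event $\{S=s\}$ and needs no independence of $U_A$ from $(U_S,U_D)$; the no-confounding assumption is genuinely used only in your final step, where $P_{\model}(A^{\doop{S=s'}}=1)$ is equated with $P_{\model}(A=1\mid S=s')$.
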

    
\begin{proof}
\begin{figure}[t]
     \centering
            \begin{tikzpicture}
            \tikzstyle{vertex}=[circle,fill=none,draw=black,minimum size=17pt,inner sep=0pt]
\node[vertex] (S) at (0,0) {$S$};
\node[vertex][fill=lightgray] (U_S) at (0,-1) {$U_S$};
\node[vertex] (S') at (0,-2) {$S'$};
\node[vertex] (A) at (3,0) {$A$};
\node[vertex] (A') at (3,-2) {$A'$};
\node[vertex] (D) at (1.5,1.5) {$D$};
\node[vertex] (D') at (1.5,-3.5) {$D'$};
\node[vertex][fill=lightgray] (U_D) at (1.5,-1) {$U_D$};
\node[vertex][fill=lightgray] (U_A) at (3,-1) {$U_A$};
\path (S) edge (D);
\path (D) edge (A);
\path (S) edge (A);
\path (S') edge (A');
\path (S') edge (D');
\path (D') edge (A');
\path (U_S) edge (S);
\path (U_D) edge (D);
\path (U_D) edge (D');
\path (U_A) edge (A);
\path (U_A) edge (A');

            \end{tikzpicture}
        \caption{Causal graph of twin network $(\model^{\text{twin }})^{\doop{S'=s'}}$} 
        \label{fig:twin_net_kusner}
        \end{figure}
    The right-hand side in Definition~\ref{def:ctrfkusner} is $P_{\model}\Paren{A \mid D=d,S=s}$. Therefore, for $s,d$ such that $P_{\model}(s,d) >0$, counterfactual fairness implies that 
    \begin{equation*}
    P_{\model}\Paren{\outcome^{\doop{\sex=\lsex'}}, \dept = \ldept \mid \sex= \lsex} = P_{\model}\Paren{\outcome, \dept = \ldept \mid \sex=\lsex}.
    \end{equation*}
    Marginalizing $\dept$, 
       \begin{equation}\label{eq:marginalDctrf}
    P_{\model}\Paren{\outcome^{\doop{\sex=\lsex'}}\mid \sex= \lsex} = P_{\model}\Paren{\outcome\mid \sex=\lsex}.
    \end{equation}
By Rule $2$ of do-calculus, if $P_{\model}(S=s)>0$, 
\begin{align*}
    P_{\model}\Paren{\outcome\mid \sex=\lsex} &= P_{\model}\Paren{\outcome\mid \doop{\sex=\lsex}}, \\
    &\stackrel{(a)}{=} P_{\model}\Paren{\outcome^{\doop{\sex=\lsex'}}\mid \sex= \lsex}, \\
    &\stackrel{(b)}{=} P_{\model^{\text{twin}}}\Paren{\outcome'\mid \doop{\sex'=\lsex'},\sex= \lsex}, \\
    &\stackrel{(c)}{=} P_{\model}\Paren{\outcome \mid \doop{\sex=\lsex'}}, \\
    &\stackrel{(d)}{=} P_{\model}\Paren{\outcome \mid \sex=\lsex'}.
\end{align*}
    where $(a)$ follows from \eqref{eq:marginalDctrf}, $(b)$ follows from expressing the counterfactual $P_{\model}\Paren{\outcome^{\doop{\sex=\lsex'}}\mid \sex= \lsex}$ in the twin network model, $(c)$ follows from the twin network, and $(d)$ follows from Rule $2$ of do-calculus since $P_{\model}(s')>0$. Therefore, counterfactual fairness implies demographic parity. 
\end{proof}
Note that demographic parity falls prey to Simpson's paradox in the Berkeley example. The above result shows that a valid test for demographic parity is a valid test for \citet{KusnerLRS17}'s counterfactual fairness notion for the assumed model class, $\modelsunconfedge$.
\subsubsection{Path-dependent Counterfactual Fairness}\label{app:kusnerpathnocf}
We next show that testing the path-dependent counterfactual fairness notion given in the appendix of \citet{KusnerLRS17} coincides with a conditional independence test $A \indep S \mid D$. 

\begin{definition}[Path-dependent Counterfactual Fairness \citep{KusnerLRS17}]\label{def:pdctrfkusner}
$\model \in \modelsunconfedge$ is fair if for all $\lsex,\ldept,$ $P_{\model}\Paren{s,d}>0$ implies
\begin{equation*}
P_{\model}\Paren{\outcome^{\doop{\sex = \lsex', \dept = \ldept}}=1 \mid \dept = \ldept, \sex=\lsex} = P_{\model}\Paren{\outcome^{\doop{\sex = \lsex, \dept = \ldept}}=1 \mid \dept = \ldept, \sex=\lsex}
\end{equation*}
for $s' \neq s$.
\end{definition}

\begin{proposition}\label{prop:pathwise}
$$\left \lbrace P_{\model}(A,D,S): \model \in \modelsunconfedge \text{ satisfies path-dependent counterfactual fairness} \right \rbrace = \distobsunconf. $$
\end{proposition}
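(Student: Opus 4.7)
The plan is to prove two inclusions. For both, the main simplification is that in $\modelsunconfedge$ the exogenous variable $U_A$ is independent of $(U_S,U_D)$, hence of $(S,D)$, so the counterfactuals in Definition~\ref{def:pdctrfkusner} decouple from the conditioning event: for any $s,d,s'$,
\begin{equation*}
P_{\model}\Paren{A^{\doop{S=s',D=d}}=1 \mid S=s, D=d} = P_{\model}\Paren{f_A(s',d,U_A)=1},
\end{equation*}
and in particular the right-hand side in Definition~\ref{def:pdctrfkusner} equals $P_{\model}(A=1 \mid S=s, D=d)$. Path-dependent counterfactual fairness therefore reduces, for every $(s,d)$ with $P_{\model}(s,d)>0$ and every $s' \ne s$, to
\begin{equation*}
P_{\model}\Paren{f_A(s',d,U_A)=1} = P_{\model}(A=1 \mid S=s, D=d). \tag{$\star$}
\end{equation*}

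For the inclusion $\subseteq$, fix $d$ and any two values $s, s'$ of $S$ with $P_{\model}(s,d)>0$ and $P_{\model}(s',d)>0$. Applying $(\star)$ once at $(s,d)$ with $s'$ and once at $(s',d)$ with $s$ (and using that the left-hand side $P_{\model}(f_A(s,d,U_A)=1)$ already equals $P_{\model}(A=1\mid S=s,D=d)$ via unconfoundedness) yields
\begin{equation*}
P_{\model}(A=1 \mid S=s, D=d) = P_{\model}(A=1 \mid S=s', D=d),
\end{equation*}
which is exactly the observational condition characterizing $\distobsunconf$ (by the proof of Theorem~\ref{thm:unconfequiv}, this conditional-independence statement is equivalent to $P_{\model}(A,D,S) = P_{\model}(S)\otimes P_{\model}(D\mid S)\otimes P_{\model}(A\mid D)$).

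For the inclusion $\supseteq$, given $P \in \distobsunconf$, I build an explicit SCM $\model \in \modelsunconfedge$ with $P_{\model}(A,D,S)=P$ that lies in $\nullgraphunconf$, so that path-dependent counterfactual fairness is trivially satisfied. Concretely, take $U_S, U_D, U_A$ mutually independent with $U_A \sim \mathrm{Unif}[0,1]$, define $f_S, f_D$ by any standard measurable-coupling construction reproducing $P(S)$ and $P(D\mid S)$, and set $f_A(s,d,u_A) \triangleq \mathbf{1}[u_A \le P(A=1\mid D=d)]$, which does not depend on $s$. Then $f_A(s',d,U_A) = f_A(s,d,U_A)$ pointwise, so $(\star)$ is trivially satisfied, and $P_{\model}(A\mid S,D) = P(A\mid D)$, giving $P_{\model}(A,D,S) = P(S)P(D\mid S)P(A\mid D) = P$.

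The only subtlety lies in the support conditions: $(\star)$ constrains $P_{\model}(f_A(s',d,U_A)=1)$ even when $P_{\model}(s',d)=0$, so the forward direction does not assert $P_{\model}(A=1\mid S=s',D=d) = P_{\model}(A=1\mid S=s,D=d)$ unless both are well-defined. This is precisely why the conclusion is phrased as distributional equivalence to $\distobsunconf$ rather than as a structural restriction, and it is also why the construction above — which simply ignores unobservable strata by placing a fair $f_A$ everywhere — is sufficient for the reverse inclusion.
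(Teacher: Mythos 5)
Your proof is correct and follows essentially the same route as the paper: the forward inclusion derives $P_{\model}(A=1\mid S=s,D=d)=P_{\model}(A=1\mid S=s',D=d)$ from the fairness condition plus unconfoundedness of $U_A$ (the paper phrases this via the twin network rather than your direct exogenous-independence computation, but the content is identical), and the reverse inclusion realizes every $P\in\distobsunconf$ by a model with no $\sex\to\outcome$ edge (the paper cites Theorem~\ref{thm:unconf_test_equiv} where you give the explicit construction). Your closing remark on the zero-probability strata correctly identifies the only subtlety and handles it the same way the paper does.
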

\begin{proof}
\begin{figure}[th]
     \centering
            \begin{tikzpicture}
            \tikzstyle{vertex}=[circle,fill=none,draw=black,minimum size=17pt,inner sep=0pt]
\node[vertex] (S) at (0,0) {$S$};
\node[vertex][fill=lightgray] (U_S) at (0,-1) {$U_S$};
\node[vertex] (S') at (0,-2) {$S'$};
\node[vertex] (A) at (3,0) {$A$};
\node[vertex] (A') at (3,-2) {$A'$};
\node[vertex] (D) at (1.5,1.5) {$D$};
\node[vertex] (D') at (1.5,-3.5) {$D'$};
\node[vertex][fill=lightgray] (U_D) at (1.5,-1) {$U_D$};
\node[vertex][fill=lightgray] (U_A) at (3,-1) {$U_A$};
\path (S) edge (D);
\path (D) edge (A);
\path (S) edge (A);
\path (S') edge (A');
\path (D') edge (A');
\path (U_S) edge (S);
\path (U_D) edge (D);
\path (U_A) edge (A);
\path (U_A) edge (A');
            \end{tikzpicture}
        \caption{Causal graph of twin network $(\model^{\text{twin }})^{\doop{S'=s',D'=d}}$ for $\model \in \modelsunconfedge$} 
        \label{fig:twin_net_pathwise}
        \end{figure}
 We first show that if $\model$ satisfies the path-dependent counterfactual fairness notion then $\model \in \nullobsunconf$.  The path-dependent counterfactual fairness notion implies that for all $s,d$ such that $P_{\model}(s,d)>0$, 
\begin{equation*}
P_{\model}\Paren{\outcome^{\doop{\sex = \lsex', \dept = \ldept}}=1 \mid \dept = \ldept, \sex=\lsex} = P_{\model}\Paren{\outcome = 1\mid \dept = \ldept, \sex=\lsex}.
\end{equation*}
If for $s' \neq s$, $P_{\model}(s',d)>0$, then we simplify $ P_{\model}\Paren{\outcome^{\doop{\sex = \lsex', \dept = \ldept}}=1 \mid \dept = \ldept, \sex=\lsex}$ using the twin network in Figure~\ref{fig:twin_net_pathwise}.
\begin{align*}
P_{\model}\Paren{\outcome^{\doop{\sex = \lsex', \dept = \ldept}}=1 \mid \dept = \ldept, \sex=\lsex} &= P_{\model^{\text{twin}}}\Paren{\outcome'=1 \mid \doop{\dept' = \ldept, \sex' = \lsex'}, \dept = \ldept, \sex=\lsex} \\
    &\stackrel{(a)}{=} P_{\model^{\text{twin}}}\Paren{\outcome'=1 \mid \doop{\dept' = \ldept, \sex' = \lsex'}}, \\
    &\stackrel{(b)}{=} P_{\model}\Paren{\outcome=1 \mid \dept = \ldept, \sex = \lsex'},
\end{align*}
where $(a)$ follows since $S,D \indep A'$ in the intervened twinned SCM and $(b)$ follows from the twinned SCM. This implies that $$P_{\model}\Paren{\outcome=1 \mid \dept = \ldept, \sex = \lsex} = P_{\model}\Paren{\outcome=1 \mid \dept = \ldept, \sex = \lsex'} = P_{\model}\Paren{\outcome=1 \mid \dept = \ldept}.$$ If instead, $P_{\model}(s',d)=0$, then still $P_{\model}\Paren{A=1 \mid D=d, S=s} = P_{\model}\Paren{A=1 \mid D=d}$. Therefore, $\model \in \nullobsunconf$.

Clearly, if $\model \in \nullgraphunconf$, path-dependent counterfactual fairness is satisfied. However, note that, Example~\ref{ex:unconfexample} satisfies path-dependent counterfactual fairness but does not belong to $\nullgraphunconf$. The conclusion follows from Theorem~\ref{thm:unconf_test_equiv}. 
\end{proof}

\subsection{With Confounding}
In this section, we compare the statistical tests that result from the NDE that the notions of \citet{NabiShpitser18} and \citet{Chiappa19} are based on, and \cite{KusnerLRS17}'s counterfactual fairness and path-dependent counterfactual fairness notions, when confounding is allowed. 

\subsubsection{NDE} With confounding between the mediator and the outcome, \citet{KaufmanKMGP05} obtain bounds on the NDE for the all-binary variable case by the linear programming approach of \citet{BalkePearl97}. The resulting bounds are implied by the IV inequalities but not equivalent to them, which gives us a strictly weaker test than the IV inequalities. For completeness, we present the bounds below. The lower and upper bounds for $NDE(A;0 \rightarrow 1)$ are

\begin{align*}
   &\max \left.\begin{cases}
        P(A=0 \mid S=0) - 1, \\
        P(A=0 \mid D=0 \mid S=0) - P(A=1 \mid D=1 \mid S=0) + P(A=1 \mid D=0 \mid S=1) - 1 \\
        P(A=0 \mid D=1 \mid S=0) - P(A=1 \mid D=0 \mid S=0) + P(A=1 \mid D=1 \mid S=1) - 1
    \end{cases} \right\}, \\
    &\min \left.\begin{cases}
        1- P(A=1 \mid S=0), \\
        1+P(A=0 \mid D=1 \mid S=0) - P(A=1 \mid D=0 \mid S=0) - P(A=0 \mid D=0 \mid S=1) \\
        1+P(A=0 \mid D=0 \mid S=0) - P(A=1 \mid D=1 \mid S=0) - P(A=0 \mid D=1 \mid S=1)
    \end{cases} \right\}. \\
\end{align*}

The lower and upper bounds for $NDE(A;1 \rightarrow 0)$ are

\begin{align*}
   &\max \left.\begin{cases}
        P(A=0 \mid S=1) - 1, \\
        P(A=1 \mid D=0 \mid S=0) - P(A=1 \mid D=1 \mid S=1) + P(A=0 \mid D=0 \mid S=1) - 1 \\
        P(A=1 \mid D=1 \mid S=0) - P(A=1 \mid D=0 \mid S=1) + P(A=0 \mid D=1 \mid S=1) - 1
    \end{cases} \right\}, \\
    &\min \left.\begin{cases}
        1- P(A=1 \mid S=1), \\
        1+P(A=0 \mid D=0 \mid S=1) - P(A=0 \mid D=1 \mid S=0) - P(A=1 \mid D=1 \mid S=1) \\
        1+P(A=0 \mid D=1 \mid S=1) - P(A=0 \mid D=0 \mid S=0) - P(A=1 \mid D=0 \mid S=1) 
    \end{cases} \right\}. \\
\end{align*}
Equating the NDE to $0$, gives us a strictly larger null hypothesis compared to the one obtained based on the IV inequalities. This implies that the resulting statistical test is strictly weaker.

\subsubsection{Counterfactual Fairness \citep{KusnerLRS17}} \label{app:ctrfkusner-cf}
The proof of Proposition~\ref{prop:ctrf-demo-nocf} also holds when confounding is allowed since the implications of the do-calculus rules in the proof hold even in the twin network with confounding. 


\subsubsection{Path-dependent Counterfactual Fairness \citep{KusnerLRS17}} \label{app:pathwise-cf}

 \begin{proposition}\label{prop:cf-ctrf-pathwise}
If $\model \in \modelsedge$ satisfies path-dependent counterfactual fairness then $P_M(D,A,S) \in \distgraph = \distiv$. If $M \in \nullgraph$, then $\model$ satisfies path-dependent counterfactual fairness.
 \end{proposition}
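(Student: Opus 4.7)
The plan splits the claim into its two implications, with the easier direction ($\nullgraph \Rightarrow$ path-dependent CF) handled in one line and the harder direction routed through the IV characterization $\distgraph = \distiv$ from Lemma~\ref{lem:mk_equivalence}. Note that the twin-network argument used in the unconfounded case (Proposition~\ref{prop:pathwise}) breaks down here because, with the $R$ confounder between $\dept$ and $\outcome$, $A'$ is no longer $d$-separated from $(S,D)$ in the intervened twin SCM.

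For $\model \in \nullgraph$, Proposition~\ref{prop:cfnotions} gives that $f_A(s,d,U_A,U)$ is constant in $s$ almost surely, so $A^{\doop{\sex=s,\dept=d}}$ and $A^{\doop{\sex=s',\dept=d}}$ agree as random variables and Definition~\ref{def:pdctrfkusner} is trivial.

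For the forward direction, I adopt the response-function parameterization $\respfunc = (\respfunc_1, \respfunc_2)$ from the proof of Lemma~\ref{lem:mk_equivalence}, with $\respfunc_1 : \cX_\sex \mapsto \cX_\dept$ and $\respfunc_2 : \cX_\sex \times \cX_\dept \mapsto \cX_\outcome$. Since $\model \in \modelsedge$ has no $\sex$-$\dept$ confounding, $\sex \indep \respfunc$. Write $p_{sda} \triangleq P_\model(\dept=d,\outcome=a \mid \doop{\sex=s}) = P(\respfunc_1(s)=d,\respfunc_2(s,d)=a)$. Using consistency and $\sex \indep \respfunc$, path-dependent counterfactual fairness is equivalent to: for every $(s,d)$ with $P_\model(\sex=s,\dept=d) > 0$, every $s' \neq s$, and every $a \in \{0,1\}$,
\begin{equation*}
    P(\respfunc_1(s) = d,\, \respfunc_2(s',d) = a) = P(\respfunc_1(s) = d,\, \respfunc_2(s,d) = a) = p_{sda}.
\end{equation*}

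By Lemma~\ref{lem:mk_equivalence}, it remains to verify the IV inequalities $p_{sd0} + p_{s'd1} \leq 1$ for all $d$ and $s \neq s'$ (the $s=s'$ cases are normalization). When both $P_\model(\sex=s,\dept=d)$ and $P_\model(\sex=s',\dept=d)$ are positive, swapping the roles of $s,s'$ in the equivalent form above yields $p_{s'd1} = P(\respfunc_1(s') = d, \respfunc_2(s,d) = 1)$, so
\begin{align*}
    p_{sd0} + p_{s'd1} &= P(\respfunc_1(s) = d,\, \respfunc_2(s,d) = 0) + P(\respfunc_1(s') = d,\, \respfunc_2(s,d) = 1) \\
    &\leq P(\respfunc_2(s,d) = 0) + P(\respfunc_2(s,d) = 1) = 1.
\end{align*}
If $P_\model(\sex=s,\dept=d) = 0$, then since $P_\model(\sex=s) > 0$ in $\modelsedge$, the identity $P_\model(\sex=s,\dept=d) = P_\model(\sex=s)\sum_a p_{sda}$ forces $p_{sda} = 0$ for all $a$ and the bound is vacuous; the case $P_\model(\sex=s',\dept=d) = 0$ is symmetric. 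The main bookkeeping obstacle is precisely these boundary cases where path-dependent CF imposes no direct constraint; away from them, the argument reduces to a clean subset bound on disjoint events in the response-function space.
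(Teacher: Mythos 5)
Your proof is correct, but the forward direction takes a genuinely different route from the paper's. The paper argues constructively: it symmetrizes the outcome mechanism, replacing $f_{\outcome}(s,d,U_A,U)$ by the mixture $\tfrac{1}{2}\bigl(f_{\outcome}(0,d,U_A,U)+f_{\outcome}(1,d,U_A,U)\bigr)$, and uses path-dependent counterfactual fairness to show the resulting model $\bar{\model}\in\nullgraph$ is observationally equivalent to $\model$; membership in $\distgraph$ is then immediate and $\distgraph=\distiv$ is quoted from Theorem~\ref{thm:equivalence}. You instead translate path-dependent counterfactual fairness into the response-function parameterization, derive the identity $P(\respfunc_1(s)=d,\respfunc_2(s',d)=a)=p_{sda}$, verify the IV inequalities $p_{sd0}+p_{s'd1}\le 1$ via the disjoint-events bound on $\{\respfunc_2(s,d)=0\}$ and $\{\respfunc_2(s,d)=1\}$, and then invoke the sharpness result $\mkiv=\mkgraph$ to land in $\distgraph$. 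Your route leans on the harder completeness direction of Lemma~\ref{lem:mk_equivalence}, whereas the paper's construction is self-contained for the inclusion $P_{\model}(D,A,S)\in\distgraph$; on the other hand, your argument cleanly handles the degenerate strata $P_{\model}(\sex=s,\dept=d)=0$ and avoids the paper's slight awkwardness that the averaged mechanism takes values outside $\{0,1\}$ (it implicitly requires a randomized mechanism). The one step you compress is reassembling the observational distribution from $P_{\model}(\sex)$ and the kernel $P_{\model}(\dept,\outcome\mid\doop{\sex})=P_{\model}(\dept,\outcome\mid\sex)$, but this is exactly the factorization the paper records after Lemma~\ref{lem:mk_equivalence}. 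Your converse direction is essentially the paper's, phrased at the level of mechanisms rather than the twin network, and your opening remark that the unconfounded twin-network $d$-separation argument fails here is accurate.
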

\begin{proof}
We show that a model $\model \in \modelsedge$ that satisfies the path-dependent counterfactual fairness notion is observationally equivalent to a model in $\nullgraph$. This implies that the set of observational distributions of models that satisfy the path-dependent counterfactual notion of fairness, are described by $\distgraph$. 

If $\model \in \modelsedge$ satisfies path-dependent counterfactual fairness, then for all $s,d$ such that $P_{\model}\Paren{s,d}>0$,
\begin{equation*}
P_{\model}\Paren{\outcome^{\doop{\sex = \lsex', \dept = \ldept}}=1 \mid \dept = \ldept, \sex=\lsex} = P_{\model}\Paren{\outcome^{\doop{\sex = \lsex, \dept = \ldept}}=1 \mid \dept = \ldept, \sex=\lsex}
\end{equation*}
for $s' \neq s$. Note that the right-hand side above is $P_{\model}\Paren{\outcome = 1\mid \dept = \ldept, \sex=\lsex}$. The counterfactual $P_{\model}\Paren{\outcome^{\doop{\sex = \lsex', \dept = \ldept}}=1 \mid \dept = \ldept, \sex=\lsex}$ is given by the push-forward of $P(U_A,U|D=d,S=s)$ through $f_{A}(s',d,U_A,U)$. Because of the independence on the value of $s'$, the same holds for the function $$\bar{f}_{A}\Paren{d,U_A,U} = \frac{1}{2}\Paren{f_{A}(0,d,U_A,U) + f_{A}(1,d,U_A,U)}.$$ Consider an SCM, $\bar{\model}$, that is identical to $\model$ except for the causal mechanism of $A$ being $\bar{f}_{A}$. 
Clearly, $\bar{\model} \in \nullgraph$ and $P_M(S,D) = P_{\bar{M}}(S,D)$. By the above argument, for all $s,d$ such that $P_{\model}(s,d)>0$, we have $P_{\model}\Paren{\outcome = 1\mid \dept = \ldept, \sex=\lsex} = P_{\bar{\model}}\Paren{\outcome = 1\mid \dept = \ldept, \sex=\lsex}$. This implies that $P_{\model}(D,A,S) \in \distgraph = \distiv$. 
\begin{figure}[th]
     \centering
            \begin{tikzpicture}
            \tikzstyle{vertex}=[circle,fill=none,draw=black,minimum size=17pt,inner sep=0pt]
\node[vertex] (S) at (0,0) {$S$};
\node[vertex][fill=lightgray] (U_S) at (0,-1) {$U_S$};
\node[vertex] (S') at (0,-2) {$S'$};
\node[vertex] (A) at (3,0) {$A$};
\node[vertex] (A') at (3,-2) {$A'$};
\node[vertex] (D) at (1.5,1.5) {$D$};
\node[vertex] (D') at (1.5,-3.5) {$D'$};
\node[vertex][fill=lightgray] (U_D) at (1.5,-1) {$U_D$};
\node[vertex][fill=lightgray] (U_A) at (3,-1) {$U_A$};
\node[vertex][fill=lightgray] (U) at (5.5,-1) {$U$};

\path (S) edge (D);
\path (D) edge (A);
\path (D') edge (A');
\path (U_S) edge (S);
\path (U_D) edge (D);
\path (U_A) edge (A);
\path (U_A) edge (A');
\path (U) edge (A');
\path (U) edge (A);
\path (U) edge (D);
\path (U) edge (D');
            \end{tikzpicture}
        \caption{Causal graph of twin network $(\model^{\text{twin }})^{\doop{S'=s',D'=d}}$ for $\model \in \nullgraph$} 
        \label{fig:twin_net_pathwise_cf}
        \end{figure}
Conversely, if $\model \in \nullgraph$, then from the twin network of $\model$ in Figure~\ref{fig:twin_net_pathwise_cf}, clearly $A' \indep S' \mid S,D$ and therefore, path-dependent counterfactual fairness is satisfied.   

\end{proof}
Therefore, for the assumed model class, a valid statistical test for path-dependent counterfactual fairness is also a valid test for the IV inequalities from Theorem~\ref{thm:iv_tight} and vice versa.

\section{Additional Results for Bayesian Testing Procedure: Cum-laude Dataset and Prior Sensitivity}\label{app:bayes}

We consider the dataset from \citet{Bol23} that contains data from $5239$ PhD students in the Netherlands studying at a large Dutch university from 2011-2021. \citet{Bol23} observed a bias in the percentage of `cum-laude' distinctions awarded to male PhD students ($6.57 \%$) versus female PhD students ($3.68 \%$). 

As in the Berkeley example, there is data on the sex of the student, their academic field and whether they were awarded cum-laude. Unlike the Berkeley example, there are more covariates that measure additional information, including the sex composition of the dissertation committee, the sex composition of the supervisory team that includes the promoters and co-promoters. For the current analyses we don't take into account these covariates and only analyze the dataset with respect to sex, academic field and award outcome. 

As reported by \citet{Bol23}, unlike the Berkeley dataset, the bias among female and male cum-laude award rates does not vanish when conditioned on department. Therefore, with the assumption of no confounding between the academic field choice and the cum-laude award outcome, the conclusion of the conditional independence test implies that the data generating mechanism is unfair when assuming that no latent unprotected mediators between sex and cum-laude award rates exist. Allowing for confounding requires us to use the Bayesian testing procedure proposed in Section~\ref{sec:bayesiantest} for the IV inequalities. We have $|\cX| = |\cX_{\dept}| = 6$, $|\cY| = |\cX_{\outcome}|=2, |\cZ| = |\cX_{\sex}|=2$. We choose a flat Dirichlet prior over parameters ($\theta = \left \lbrace P(d,a,s): d \in \cX_D, a \in \cX_A, s \in \cX_S\right \rbrace$) in both models $\MM_0, \MM_1$, i.e., for $i=0,1, \pi(\theta | \MM_i) = c_i \text{Dir}\Paren{1,1,\cdots,1}$ where $c_i$ is a normalizing constant. The counts from the data $R_1, R_2 \cdots R_m$ are used to obtain the posterior, $P(\theta \mid R_1, R_2, \cdots R_m)$ which is also a Dirichlet distribution. Using $n=10^6$ samples, we observe no violations of the IV inequality. Therefore, the confidence interval for the posterior probability of the cum-laude data satisfying the IV inequalities is $\left[1-3.69\times 10^{-6},1\right]$. Hence, when allowing for confounding, on arrives at the conclusion that, given the available data and restricting the analysis to only three variables, the fairness of the data-generating mechanism is undecidable.

\textbf{Prior Sensitivity: }
For both the Berkeley dataset and the Bol dataset, the final confidence interval is dependent on the choice of the prior. We presented the analysis with a flat Dirichlet prior, for both datasets. We find that the lower limit of the confidence interval does not change as we vary the parameter $\alpha$ over the interval $\left[ 10^{-2}, 10^{5} \right]$ for a Dirichlet $\text{Dir}\Paren{\alpha, \alpha, \cdots, \alpha}$ prior. 

\textbf{Frequentist Test of \citet{WangRR17}: } The frequentist test of \citet{WangRR17} converts every IV inequality into a one-sided association test for a $2\times2$ contingency table. Specifically, for fixed $d,a$, an IV inequality of the form 
$$\Pr\Paren{D=d,A=a \mid S=1} + \Pr\Paren{D=d,A=1-a \mid S=0} \leq 1$$ is transformed into $$\gamma^{d,a} \leq 0$$ where 
$\gamma^{d,a} \triangleq \Pr\Paren{Q^{d,a}=1\mid S=1} - \Pr\Paren{Q^{d,a}=1\mid S=0}$ where $$Q^{d,a} = \begin{cases}
    \bm{1}\left[D=d,A=a\right] & \text{if }S=1, \\
    1-\bm{1}\left[D=d,A=1-a\right] & \text{if } S=0. \\
\end{cases}$$
Note that $Q^{d,a}$ and $S$ are binary random variables and $\gamma^{d,a}=0$ if and only if $Q^{d,a} \indep S$. Further $\gamma^{d,a} \in [-1,1]$ for all $d,a$. 

Since the direction of the one-sided test matters, we check the sign of the difference of conditional probabilities using maximum likelihood (ML) estimates and then conduct a Pearson's chi-square test for independence. For the Berkeley data, the ML estimates of $\gamma^{d,a}$ were negative (less than $-0.6$) for all $d,a$, and the independence tests rejected the null hypothesis of independence with p-value $0.0$ (i.e., less than the smallest positive number representable using double precision floating point format, i.e., $<5\times10^{-324}$). We take this to be significant evidence that the null hypothesis of $\gamma^{d,a} \leq 0$ is not rejected. As noted in \citet{WangRR17}, although we test for multiple IV inequalities, the Bonferroni correction is $1/2$ and does not scale as the number of IV inequalities.

\end{document}